\newtheorem{theorem}{Theorem}
\newtheorem{lemma}[theorem]{Lemma}
\newtheorem{corollary}[theorem]{Corollary}
\newtheorem{proposition}[theorem]{Proposition}
\theoremstyle{definition}
\theoremstyle{remark}
\newtheorem{remark}[theorem]{Remark}
\newtheorem{example}{Example}
\newcommand{\DeltaT}{\Delta_f(t)}
\newcommand{\frtwo}{\tfrac{1}{\sqrt{2}}}
\newcommand{\rtwo}{\sqrt{2}}
\newcommand{\rrone}{\rule[.7ex]{2.5em}{.8pt}}
\newcommand{\rrtwo}{\raisebox{0pt}[0pt]{\rule[.7ex]{1em}{0pt}\rule[.7ex]{.8pt}{2.8ex}\rule[.7ex]{1.5em}{.8pt}}}
\newcommand{\Rmnum}[1]{\expandafter\@slowromancap\romannumeral #1@}
\newcommand{\casequadricsquadrics}{\Rmnum{1}\ }
\newcommand{\casequadricsplanes}{\Rmnum{2}\ }
\newcommand{\casequadricsconics}{\Rmnum{3}\ }
\newcommand{\caseplanesconics}{\Rmnum{4}\ }
\newcommand{\casequadricslines}{\Rmnum{5}\ }
\newcommand{\casequadricsvertices}{\Rmnum{6}\ }
\newcommand{\caseconicsconicsthree}{\Rmnum{7}\ }
\newcommand{\caseconicsconicstwo}{\Rmnum{8}\ }
\newcommand{\caseconicslinesthree}{\Rmnum{9}\ }
\newcommand{\caseconicslinestwo}{\Rmnum{10}\ }
\newcommand{\caselineslines}{\Rmnum{11}\ }
\newcommand{\raisenumber}[3]{\hspace{#1}\raisebox{#2}{#3}}
\newcommand{\raisenumbertwo}[3]{\hspace{#1} \raisebox{#2}{#3}}
\newcommand{\boundnum}[1]{\hspace{-10pt}\rule[#1]{11pt}{0.3pt} \hspace{-5.3pt}
\rule[#1]{0.3pt}{9.5pt} }
\newcommand{\Lower}[1]{\smash{\lower 1.5ex \hbox{#1}}}
\newcommand{\MLower}[1]{\smash{\lower 4.5ex \hbox{#1}}}
\newcommand{\TLower}[1]{\smash{\lower 2.5ex \hbox{#1}}}
\begin{document}

\begin{frontmatter}



\title{Continuous Collision Detection for Composite Quadric Models}


\author[hku]{Yi-King Choi\corref{cor1}}
\ead{ykchoi@cs.hku.hk}

\author[hku]{Wenping Wang}
\ead{wenping@cs.hku.hk}

\author[inria]{Bernard Mourrain}
\ead{mourrain@sophia.inria.fr}

\author[sdu]{Changhe Tu}
\ead{chtu@sdu.edu.cn}

\author[cas]{Xiaohong Jia}
\ead{xhjia@amss.ac.cn}

\author[hku]{Feng Sun}
\ead{fsun@cs.hku.hk}

\cortext[cor1]{Corresponding author}
\address[hku]{The University of Hong Kong, Pokfulam Road, Hong Kong, China}
\address[inria]{GALAAD, INRIA M\'{e}diterran\'{e}e, 2004 route des lucioles,
                06902 Sophia-Antipolis, France}
\address[sdu]{Shandong University, Jinan, China}
\address[cas]{KLMM, AMSS \& NCMIS, Chinese Academy of Science, Beijing, China}

\begin{abstract}
A composite quadric model (CQM) is an object modeled by 
piecewise linear or quadric patches.
We study the continuous detection problem of a special type of CQM objects which are commonly used in CAD/CAM, that is, the boundary surfaces of such a CQM intersect only in straight line segments or conic curve segments. 
We present a framework for continuous collision detection (CCD) of
this special type of CQM (which we also call CQM for brevity) in motion.
We derive algebraic formulations and compute numerically the first contact time instants
and the contact points of two moving CQMs in $\mathbb R^3$.
Since it is difficult to process CCD of two CQMs in a direct manner because they are composed of semi-algebraic varieties, 
we break down the problem into subproblems of solving CCD of pairs of 
boundary elements of the CQMs. 
We present procedures to solve CCD
of different types of boundary element pairs in different dimensions.
Some CCD problems are reduced to their equivalents in a lower dimensional setting,
where they can be solved more efficiently.
\end{abstract}

\begin{keyword}
continuous collision detection \sep
composite quadric models \sep
quadric surfaces


\end{keyword}

\end{frontmatter}

\section{Introduction}
Collision detection is important to many fields involving
object interaction and simulation, e.g., computer animation, computational physics,
virtual reality, robotics, CAD/CAM and virtual manufacturing.
Its primary purpose is to determine possible contacts or intersections
between objects so that proper responses may be further
carried out accordingly.
There has been considerable research in relation to collision detection, particularly
in the field of robotics and computer graphics, regarding the different issues such
as intersection tests, bounding volume computation, graphics hardware speedup, etc.
Among these studies, continuous collision detection (CCD) is currently
an active research topic, in which collision status within a continuous time span is determined.

Quadric surfaces form an important class of objects used in practice.
In CAD/CAM or industrial manufacturing, objects are often designed and
modeled using quadric surfaces because of their simple representations and ease of
handling. Quadric surfaces encompass all degree two surfaces, which include the
commonly used spheres, ellipsoids, cylinders and cones. Ellipsoids, truncated/capped
cylinders and cones are usually used as approximations to complex geometry in graphics
and robotics.  Furthermore, most mechanical parts can be modeled accurately with quadric
surfaces.  Through composite representation or CSG (constructive solid geometry) 
composition, an even wider class of complex objects are modeled by quadric surfaces.

Most existing collision detection methods are intended for piecewise linear objects 
such as triangles, boxes, polyhedrons, or simple curved primitives such as spheres.
Collision detection of objects containing quadric surfaces may be done by applying
these methods to piecewise linear approximations of the objects.
This, however, introduces geometric error and entails large storage space.
As a result, exact collision detection of quadric surfaces is important due to the extensive use of quadric surfaces as modeling primitives in applications.

In this paper we present a framework for efficient and exact
continuous collision detection (CCD) of \emph{composite quadric models}, or CQMs for short.
CQMs are modeled by piecewise linear or quadric surface patches.
The \emph{boundary elements} of a CQM may either be 
a \emph{face} (a linear or quadric surface patch), an \emph{edge} (where two faces meet) 
or a \emph{vertex} (where three or more edges meet). 
A boundary edge of a CQM is in general a degree four intersection curve of two quadrics.
However, there is a special class of CQMs whose boundary edges 
are straight line or conic curve segments only (Figure~\ref{fig:cqm_obj_eg}).
In this paper, we focus on CCD of this special class of CQMs (which we shall also denote 
as ``CQM'' for brevity),
which is by itself an important problem due to the popular use of the class in practice.
This work also represents a step towards tackling CCD of general CQMs,
which is difficult to be solved efficiently.

\begin{figure}[!ht]
\centering
    \includegraphics[width=.5\linewidth]{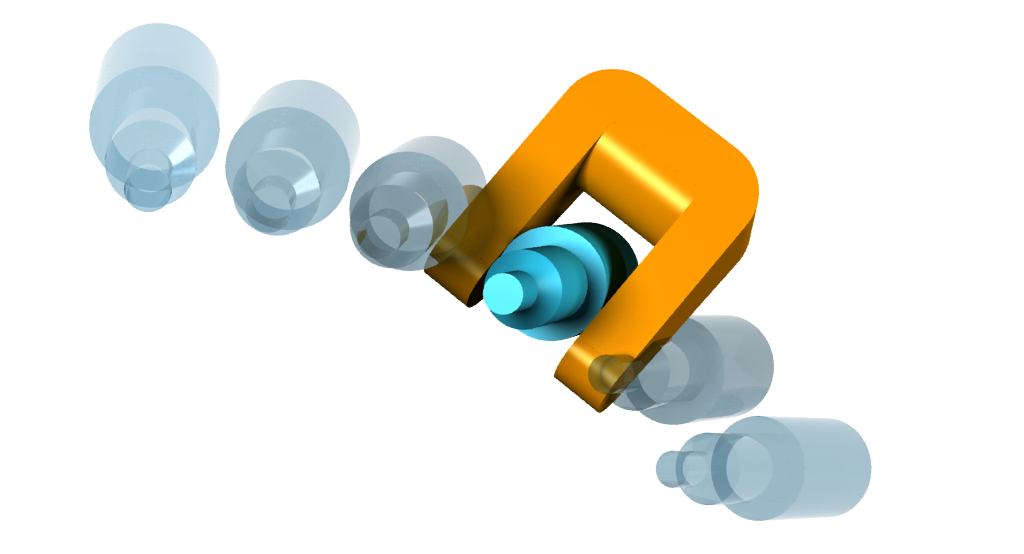}
\caption{Two CQMs in motion.  The objects are typical examples of 
the special class of CQMs whose
boundary edges are straight line or conics curve segments only.}
\label{fig:cqm_obj_eg}
\end{figure}

Our main contributions are as follows.
\begin{itemize}
\item We present a framework for exact and efficient continuous collision detection (CCD)
      of two moving composite quadric models (CQMs).
      Given two moving CQMs which are separate
      initially, our method computes their first contact time and contact point.
      The CQMs may undergo both the Euclidean and affine motions, 
      which means that the objects may either be rigid or change their shapes
      under affine transformations. 
\item Our framework comprises a collection of algebraic methods for CCD
      of different types of boundary components of a CQM.  In particular, 
      \begin{enumerate}
      \item we devise an algorithm for CCD of two moving quadrics (Section~\ref{sect:ccdquadrics}), which is based on our recent result of detecting morphological change of intersection curve for two moving quadrics (\cite{JiaWC2012}); and 
      \item we derive algebraic conditions for different configurations of 1D conics in $\mathbb{PR}$ and further devise an algorithm for CCD of two moving conics in 3D (Section~\ref{sect:CCD_space_conics}).
      \end{enumerate}
\end{itemize}

\section{Related Work}

\subsection{Continuous collision detection}

Different approaches have been proposed for solving continuous collision detection (CCD)
for various types of moving objects.
There are CCD methods by equation solving, which include
\cite{Canny1986} and \cite{RedonKC2000} for polyhedra,
\cite{ChoiWLK2006} for elliptic disks and
\cite{ChoiCWK2008} for ellipsoids.

Swept volumes (SV) are also commonly used:
\cite{Cameron1990} presents a solution using a four-dimensional space-time SV;
\cite{KimRLMT2007} and \cite{RedonLMK2005}
deal with CCD of articulated bodies by considering
SVs of line swept spheres (LSS); and
\cite{GovindarajuKLM2006} works on SVs of triangles to solve CCD of deformable models
with significant speedup using GPU.

Efficiency and accuracy are the major concerns for CCD. 
\cite{ZhangRLK2007} use the approach
of conservative advancement and achieve acceleration of CCD for articulated objects by using the Taylor model which is a generalization of interval arithmetic. 
For deforming triangle meshes, \cite{MinKM2010} proposes conservative local advancement that significantly improve CCD performance by computing motion bounds for the bounding volumes of the primitives.
A recent work by \cite{BrochuEB2012} uses geometrically exact predicates for efficient and accurate CCD of deforming triangle meshes.

Our CCD method works on exact representations of CQM models and is based on algebraic formulations.  For better efficiency, equation solving for obtaining contact time instants and contact points is done numerically.

\subsection{Intersection and collision of quadrics}

Classifications and computations of the intersections of two 
general quadrics are thoroughly studied in classical algebraic
geometry (\cite{Bromwich1906,SempleK1952,Berger1987})
and CAGD (\cite{Wang2002, Dupont2004, DupontLLP2008a,DupontLLP2008b,DupontLLP2008c,TuWW2002, TuWMW2009}).
These results, however, consider quadrics
in the complex or real projective space, and
are not applicable to collision detection problems which concern only
the real affine or Euclidean space.
There is nevertheless an obvious way to detect intersection
between stationary quadrics by computing their
real intersection curves.
Various algorithms have also been proposed
(e.g.,~\cite{Levin1979,Miller1987, WilfM1993, WangJG2002, WangGT2003}),
whose objectives are
to classify the topological or geometric structure of
the intersection curves and to derive their parametric representations.
However, these methods are difficult to extend for 
collision detection of moving quadrics.

Our previous work in \cite{ChoiWLK2006,ChoiCWK2008} presents
algorithms for exact CCD of elliptic disks and ellipsoids,
based on an algebraic
condition for the separation of two ellipsoids established by~\cite{WangWK2001}.
Although quadrics are widely used in many applications,
CCD of general quadrics has not been addressed in the
literature.
We propose recently an algebraic method for detecting the morphological change of the intersection curves of two moving quadrics in 3D real projective space (\cite{JiaWC2012}).
In this paper, we further devise an algorithm for CCD of moving quadrics which is a subproblem of CCD of CQMs.
We also develop a framework to solve CCD of CQMs, the more general class of objects
composed of piecewise linear or quadric primitives.

\section{Outline of Algorithm}

Two moving CQMs ${\cal Q}_A(t)$  and ${\cal Q}_B(t)$,
 where $t$ is a time parameter in the interval
$[t_0, t_1]$, are said to be {\em collision-free}, if the
intersection of ${\cal Q}_A(t)$ and ${\cal Q}_B(t)$ is empty for all
$t \in [t_0, t_1]$; otherwise, they are said to {\em collide}.
Two CQMs ${\cal Q}_A(t')$  and ${\cal Q}_B(t')$ at a particular time instant $t'$ are {\em in contact} or {\em touching},
if their boundaries have nonempty intersection while their interiors are disjoint. 
Given two initially separate CQMs,
our goal is to determine whether the CQMs are collision-free or not;
if they collide, their first contact time instant in $[t_0, t_1]$
and the contact point will be computed.

We assume that the CQMs undergo
artbitrary motions which are expressible as continuous functions of the time
parameter $t$.
Among the various motion types,
rational motions are easily handled by CAGD techniques 
\label{res:CAGD_tools}
that deal with splines and polynomials.
Our method involves root finding, and in the case of rational motions, the functions are 
polynomials whose roots can be efficiently solved for by these techniques.
Moreover, low-degree rational motions are found to be sufficient for modeling
smooth motions in most applications and hence further enhance efficiency.
See~\cite{JuettlerW2002} for a thorough discussion of rational motion design.
While rational motions are used in our examples, our method is also applicable
to other motions, such as helical motions which are transcendental.
Numerical solver will then be needed for root finding of these functions.

CQMs can be viewed as semi-algebraic varieties which are defined by multiple
polynomial inequalities.  Their boundary elements are often 
finite pieces on a quadric or a conic and hence it is
difficult to process CQMs using algebraic methods in a direct
manner.
To tackle CCD of CQMs, we consider pairwise CCD between
the {\em extended boundary elements} (Figure~\ref{fig:extended_element}) which are defined as follows:
\begin{itemize}
\item The complete planar or quadric surface containing a boundary face 
of a CQM $Q$ is called an {\em extended boundary face} of $Q$.
\item The complete straight line or conic curve containing a boundary edge
of a CQM $Q$ is called the {\em extended boundary edge} of $Q$.
\item An {\em extended boundary element} of a CQM $Q$ is either an extended boundary face, an extended boundary edge, or a vertex of $Q$.
\end{itemize}
\begin{figure}[!ht]
\centering
  \begin{minipage}{.5\linewidth}
    \includegraphics[width=\linewidth]{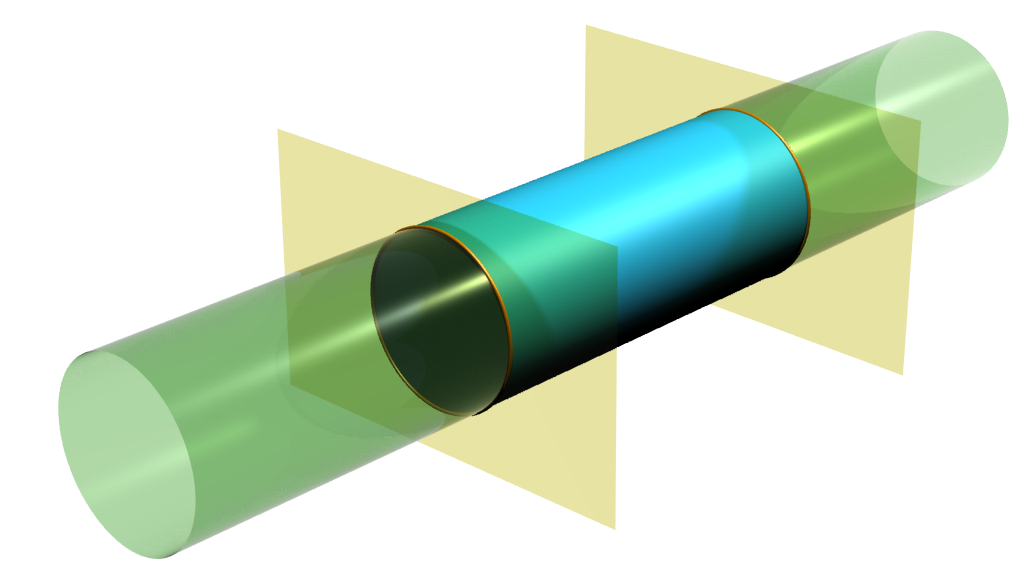}
  \end{minipage}
\caption{A capped cylinder (in blue) and its extended boundary elements.
The cylinder (in green) and the two planes (in yellow) are the extended boundary elements of the cylindrical surface
and the two disks of the capped ends, respectively.  The circular edge (in orange) is the extended boundary element of
itself.}
\label{fig:extended_element}
\end{figure}

It follows that CCD of CQMs entails solving CCD of different element types.
For example, to detect possible
contact between two moving capped cylinders (Figure~\ref{capped cylinders}), 
one should handle CCD of
(a) cylinder vs. cylinder; (b) cylinder vs. ellipse; (c) ellipse vs. plane;
and (d) ellipse vs. ellipse.

\label{res:outline}
The major steps of our algorithm are outlined as follows:
\begin{enumerate}
\item Given two CQMs, we first identify CCD subproblems between all
possible pairs of their extended boundary elements (Section~\ref{sect:subproblems}).
\item For each CCD subproblem, we use an
algebraic method to compute their first contact instant and point of contact.
We will present a classification of different types of CCD
problems that one may encounter in CCD of CQMs and discuss the detailed solution to each case (Section~\ref{sect:solveCCD}).
\item Once a contact is found between two extended boundary elements, we will
check if the contact is valid, that is, if it lies on both CQMs (Section~\ref{sect:contact_validation}),
since a contact found in Step 2 may lie on a portion of an extended boundary element that
is not part of a CQM boundary element.  
Two CQMs are in contact only if the contact point between
the extended elements lies on both CQMs (see Figure~\ref{fig:valid_contact}).
\item After the CCD subproblems are solved, the first valid contact among all pairs of
boundary elements is then the first contact of the two CQMs.
\end{enumerate}

\begin{figure}[!ht]
\centering
  \begin{minipage}{.8\linewidth}
    \begin{minipage}{.45\linewidth}
       \centering
       \includegraphics[width=\linewidth]{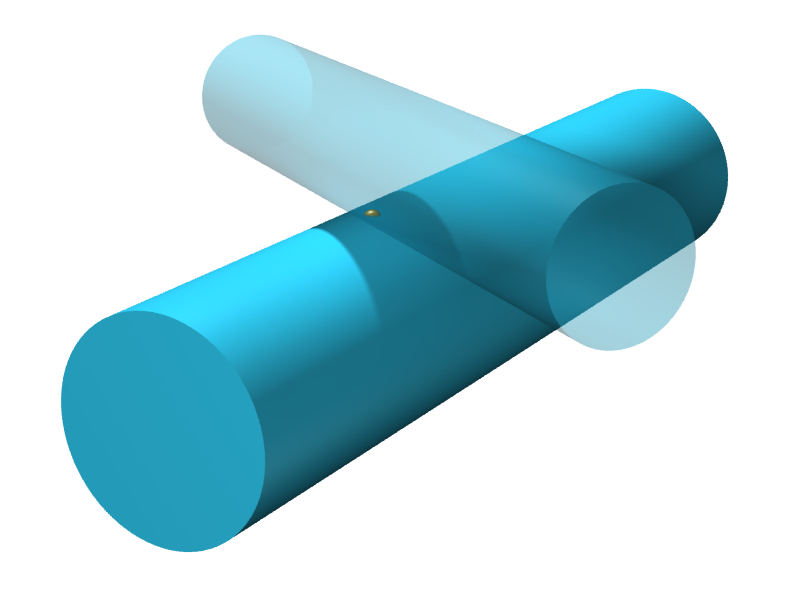}\\
       {\scriptsize (a)}
    \end{minipage}
    \hfill
    \begin{minipage}{.45\linewidth}
       \centering
       \includegraphics[width=\linewidth]{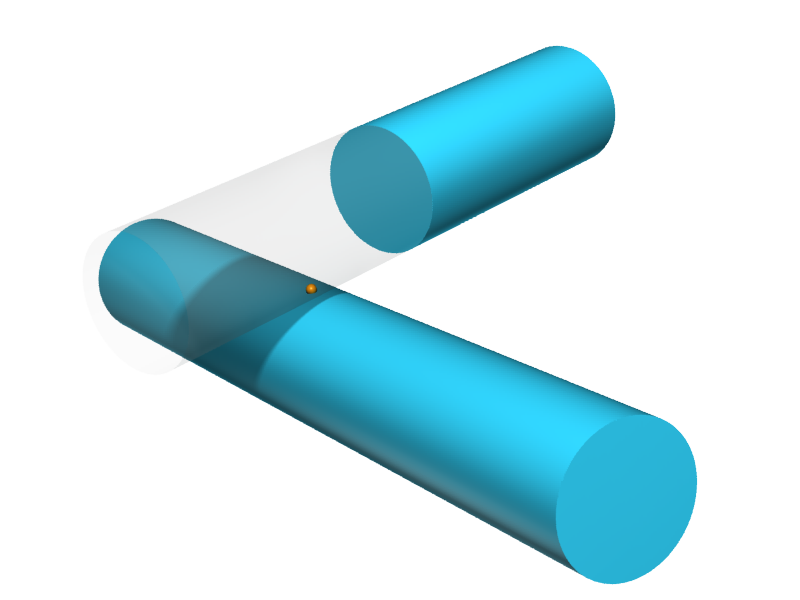}\\
       {\scriptsize (b)}
    \end{minipage}
  \end{minipage}
\caption{Contact validation.  (a) The extended boundary elements (cylinders) have a valid contact point that lie on both CQMs, so the capped cylinders are in contact.  (b) The contact point of the extended boundary elements is an invalid contact of the CQMs since it does not lie on both CQMs.}
\label{fig:valid_contact}
\end{figure}

The main algorithm is given in Algorithm~\ref{alg:main} as follows.  

\begin{center}
\begin{minipage}{0.8\textwidth}
\begin{algorithm}[H]
\caption{The Main Algorithm}\label{alg:main}
\begin{algorithmic}
\REQUIRE Two moving CQMs ${\mathcal Q}_A(t)$ and ${\mathcal Q}_B(t)$, $t \in [t_0, t_1]$, and ${\mathcal Q}_A(t_0) \cap {\mathcal Q}_B(t_0) = \emptyset$ 
\ENSURE Whether ${\mathcal Q}_A(t)$ and ${\mathcal Q}_B(t)$ are collision-free or colliding, and the first contact time and contact point in case of collision
\medskip
\STATE Identify all CCD subproblems between the extended boundary elements of
       ${\mathcal Q}_A(t)$ and ${\mathcal Q}_B(t)$.
\FOR{each CCD subproblem}
  \STATE Find, if there is any, the first candidate contact time $t_i$ with a valid
         contact point ${\bf p}_i$ that lies on both ${\mathcal Q}_A(t_i)$ and ${\mathcal Q}_B(t_i)$.
  \STATE ${\mathcal S} \leftarrow {\mathcal S} \cup \{(t_i, \mathbf{p}_i)\}$
\ENDFOR
\IF {${\mathcal S} = \emptyset$}
  \RETURN ${\mathcal Q}_A(t)$ and ${\mathcal Q}_B(t)$ is collision-free for $t \in [t_0, t_1]$
\ELSE
  \STATE $i^* \leftarrow \arg \min \{ t_i \mid (t_i, \mathbf{p}_i) \in \mathcal{S} \}$
  \RETURN $(t_{i^*}, \mathbf{p}_{i^*})$ as the first contact time and contact point of ${\mathcal Q}_A(t)$ and ${\mathcal Q}_B(t)$
\ENDIF
\end{algorithmic}
\end{algorithm}
\end{minipage}
\end{center}

\section{Identifying subproblems}
\label{sect:subproblems}

A contact of two CQMs always happens between a pair of boundary elements, one from each of the
CQMs.
Depending on the types of the boundary elements, we have different types of 
contacts---$(F,F)$, $(F,E)$,
$(F,V)$, $(E,E)$, $(E,V)$ and $(V,V)$, where $F,E$ and $V$ stand for face, edge and
vertex, respectively.
Let $\tilde I$ denote the extended boundary element of a boundary element $I$ of a CQM.
The contact types can be characterized by the geometric configuration of the extended boundary elements of two CQMS as follows:
\begin{description}
\item[$(F,F)$-type:] If a contact of type $(F,F)$ happens between two boundary faces $F_1$ and $F_2$, then the two extended boundary faces $\tilde F_1$ and $\tilde F_2$ either have a tangential contact or are identical.
\item[$(F,E)$-type:] If a contact of type $(F,E)$ happens between a boundary face $F_1$ and a boundary edge $E_2$, then either the extended boundary face $\tilde F_1$ and and the extended boundary edge $\tilde E_2$ are in tangential contact, or $\tilde E_2$ is contained in $\tilde F_1$.
\item[$(F,V)$-type:] If a contact of type $(F,V)$ happens between a boundary face $F_1$ and a vertex $V_2$, then $V_2$ must lie on the extended boundary face $\tilde F_1$.
\item[$(E,E)$-type:] If a contact of type $(E,E)$ happens between two boundary edges $E_1$ and $E_2$, then
the extended boundary edges $\tilde E_1$ and $\tilde E_2$ either have a real intersection or are identical.
\item[$(E,V)$-type:] If a contact of type $(E,V)$ happens between a boundary edge $E_1$ and a vertex $V_2$, then
$V_2$ must lie on the extended boundary edge $\tilde E_1$.
\item[$(V,V)$-type:] If a contact of type $(V,V)$ happens between two vertices $V_1$ and $V_2$, then $V_1$ and $V_2$ must coincide.
\end{description}

We now show that one only needs to deal with four of the contact types for CCD of two CQMs.

\begin{proposition}
A contact between two CQMs can be classified into one of the four basic types:  $(F,F)$, $(F,E)$, $(F,V)$ and $(E,E)$.
\end{proposition}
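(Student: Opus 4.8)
The plan is to show that the two contact types not on the list, namely $(E,V)$ and $(V,V)$, are redundant: every contact of one of these types is simultaneously a contact of a basic type, witnessed by a different but incident pair of boundary elements at the \emph{same} point. The only structural input I need is the incidence hierarchy of a CQM's boundary: every vertex is an endpoint of the (three or more) edges meeting there, and every edge lies in the closure of the two faces that meet along it; hence every vertex of a CQM lies on (is a corner of) at least one boundary face.

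First I would set up notation: suppose ${\cal Q}_A(t')$ and ${\cal Q}_B(t')$ are in contact, so their interiors are disjoint while their boundaries meet at a point $p$. Being a boundary point of ${\cal Q}_A(t')$, $p$ lies on some boundary element of ${\cal Q}_A$, and likewise $p$ lies on some boundary element of ${\cal Q}_B$; the pair of their types is the type of the contact. The key observation is that if $p$ happens to lie on a \emph{vertex} of one of the CQMs, then by the incidence fact $p$ also lies on a boundary \emph{face} of that CQM, so we may use that face instead as the witnessing element. This substitution changes neither the CQMs nor the point $p$, so the interiors are still disjoint and $p$ still lies on both witnessing elements: it is still a contact, now with a face on that side.

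Next comes the short case analysis. For a $(V,V)$ contact at $p$, replace the vertex of ${\cal Q}_A$ by an incident boundary face $F_A \ni p$; the contact is now of type $(F,V)$. For an $(E,V)$ contact between an edge of ${\cal Q}_A$ and a vertex $V_B$ of ${\cal Q}_B$ (with $p=V_B$), replace $V_B$ by an incident boundary face $F_B \ni p$; the contact is now of type $(E,F)=(F,E)$. The symmetric $(V,E)$ and $(V,F)$ cases follow by exchanging $A$ and $B$. Since every contact is of one of the six types $(F,F),(F,E),(F,V),(E,E),(E,V),(V,V)$, and the last two reduce to $(F,V)$ and $(F,E)$ respectively, every contact is classifiable as one of the four basic types.

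I do not expect a real obstacle: once the boundary incidence structure is stated precisely, the argument is essentially bookkeeping, and the only point needing care is the verification that replacing a witnessing vertex by an incident face preserves the definition of a contact (disjoint interiors, common boundary point), which is immediate because $p$ and the two CQMs are untouched. It is worth adding one remark for the algorithm's sake: the first contact instant is realized at some contact, which by the proposition admits a basic-type witness, so restricting the CCD subproblems in Algorithm~\ref{alg:main} to the four basic pair types detects it at exactly that instant and loses nothing.
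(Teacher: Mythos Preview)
Your proof is correct and takes essentially the same approach as the paper: use the boundary incidence hierarchy to replace a low-dimensional element by a containing face. The only difference is cosmetic---in the $(E,V)$ case you promote the vertex $V_B$ to an incident face to obtain an $(F,E)$ contact, whereas the paper promotes the edge $E_1$ to an incident face to obtain an $(F,V)$ contact (so that both $(E,V)$ and $(V,V)$ reduce uniformly to $(F,V)$); either reduction establishes the proposition.
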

\begin{proof}
A contact between two CQMs can be of more than one contact type, since it may lie on two or more extended boundary elements.  
Both the $(E,V)$- and $(V,V)$-type contacts can be treated as $(F,V)$-type.
In particular, an $(E,V)$-type contact between a boundary edge $E_1$ and a vertex $V_2$
is always also an $(F,V)$-type contact between $V_2$ and a boundary face on which $E_1$ lies.
A $(V,V)$-type contact between two vertices $V_1$ and $V_2$ is always also an 
$(F,V)$-type contact between $V_2$ and a boundary face on which $V_1$ lies.
\end{proof}

A CCD subproblem is defined for each pair of extended boundary elements, one from each of the two given CQMs.  
As the detection of $(F,V)$-type contacts would capture all the $(E,V)$- and $(V,V)$-type contacts, we have the following:

\begin{corollary}
It suffices to consider CCD subproblems of the four basic contact types---$(F,F)$, $(F,E)$, $(F,V)$ and $(E,E)$ to solve CCD of two CQMs.
\end{corollary}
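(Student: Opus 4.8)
The plan is to deduce the corollary from the Proposition together with the way the first contact of two CQMs is assembled from pairwise contacts of boundary elements. First I would recall that, by definition, the first contact of $\mathcal{Q}_A(t)$ and $\mathcal{Q}_B(t)$ occurs at the least $t^\ast\in[t_0,t_1]$ for which the two CQMs are in contact, and that at such a $t^\ast$ there is a contact point $\mathbf{p}^\ast$ lying on a boundary element $I_A$ of $\mathcal{Q}_A(t^\ast)$ and a boundary element $I_B$ of $\mathcal{Q}_B(t^\ast)$, with the interiors of the two CQMs disjoint. By the Proposition, the pair $(I_A,I_B)$---or, in the $(E,V)$ and $(V,V)$ cases, a suitably chosen pair obtained by replacing a vertex or edge element with a face carrying it---realizes a contact of one of the four basic types $(F,F)$, $(F,E)$, $(F,V)$, $(E,E)$.

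Next I would argue that the CCD subproblem attached to the corresponding pair of extended boundary elements $(\tilde I_A,\tilde I_B)$ detects a valid candidate contact no later than $t^\ast$: since each extended element contains the element it extends, $\mathbf{p}^\ast$ lies on both $\tilde I_A$ and $\tilde I_B$, and it also lies on both CQMs, so it is a valid contact for that subproblem at time $t^\ast$, whence the first valid candidate time $t_i$ returned for this subproblem satisfies $t_i\le t^\ast$. Conversely, any valid candidate $(t_i,\mathbf{p}_i)$ returned by \emph{any} subproblem has $\mathbf{p}_i$ on both CQMs with disjoint interiors (this is precisely what the contact-validation step of Section~\ref{sect:contact_validation} enforces), hence is an actual contact of the two CQMs, so $t_i\ge t^\ast$ by minimality of $t^\ast$. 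Therefore the minimum of the $t_i$ taken over subproblems of the four basic contact types equals $t^\ast$ and the associated point is a valid first contact point; and if no such subproblem returns a candidate, then the first inequality shows no contact exists, so the CQMs are collision-free. This is exactly the output of Algorithm~\ref{alg:main} when restricted to subproblems of the four basic types.

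The step I expect to require the most care is the reduction of the $(E,V)$- and $(V,V)$-type contacts to $(F,V)$-type at the level of \emph{subproblems}, not merely of contact classification: one must check that the face $F$ furnished by the Proposition (so that the vertex lies on $\tilde F$) is genuinely a boundary face of the CQM, so that $\tilde F$ is an extended boundary face and $(\tilde F,\text{vertex})$ is among the subproblems identified in Step~1 of the algorithm, and that the contact point survives the validation test of Section~\ref{sect:contact_validation} for that pair, i.e.\ it lies on the boundary face $F$ itself and not merely on its extension. A secondary point to make explicit is that one physical contact may belong to several of the four basic types at once; this is harmless, since we only need it to be detected by at least one subproblem, and taking the minimum over all of them does no damage.
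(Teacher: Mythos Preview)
Your argument is correct and follows the same route as the paper: both derive the corollary directly from the Proposition by observing that the $(E,V)$- and $(V,V)$-type contacts are subsumed by $(F,V)$-type subproblems. The paper, however, treats the corollary as essentially immediate---its entire justification is the single sentence ``As the detection of $(F,V)$-type contacts would capture all the $(E,V)$- and $(V,V)$-type contacts''---whereas you have supplied the two-sided inequality argument ($t_i\le t^\ast$ from containment of elements in their extensions, $t_i\ge t^\ast$ from contact validation) that actually pins down why the minimum over the four basic subproblem types recovers the true first contact time; this is a welcome elaboration rather than a different approach.
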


\begin{example}\label{eg:cqm_concave}
Figure~\ref{fig:CQM_objects} shows two CQMs which is (a) a ring
constructed by subtracting a cylinder from an ellipsoid; and (b) 
a wedge formed by first subtracting three half spaces and then a circular cylinder
from an elliptic cylinder.
The CCD subproblems for the two CQMs are as follows:
\begin{itemize}
\item 8 $(F,F)$-type subproblems: $\{ F_{{\mathcal A},i} \}_{i=1}^2 \times \{ F_{{\mathcal B},j} \}_{j=1}^4$.
\item 20 $(F,E)$-type subproblems: $\{ F_{{\mathcal A},i} \}_{i=1}^2 \times \{ E_{{\mathcal B},j} \}_{j=1}^6$
              and
              $\{ E_{{\mathcal A},i} \}_{i=1}^2 \times \{ F_{{\mathcal B},j} \}_{j=1}^4$.
\item 8 $(F,V)$-type subproblems: $\{ F_{{\mathcal A},i} \}_{i=1}^2 \times \{ V_{{\mathcal B},j} \}_{j=1}^4$.
\item 12 $(E,E)$-type subproblems: $\{ E_{{\mathcal A},i} \}_{i=1}^2 \times \{ E_{{\mathcal B},j} \}_{j=1}^6$.
\end{itemize}
\qed
\end{example}

\begin{figure}[!ht]
\centering
  \begin{minipage}{.8\linewidth}
  \begin{minipage}{.45\linewidth}\centering
    \includegraphics[width=\linewidth]{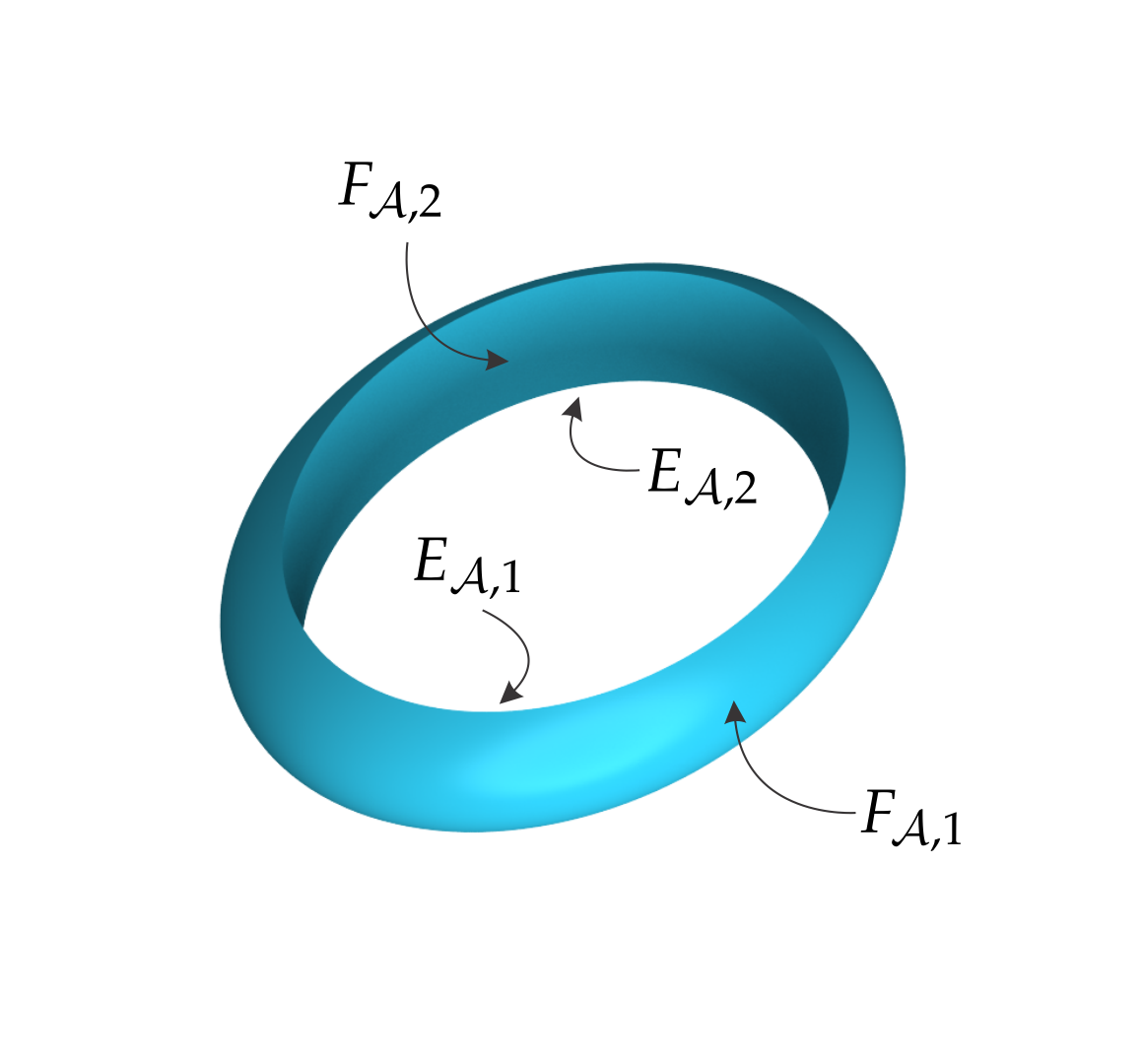}\\ \vspace{-1ex}
    {\scriptsize (a)}
  \end{minipage}\hfill
  \begin{minipage}{.45\linewidth}\centering
    \includegraphics[width=\linewidth]{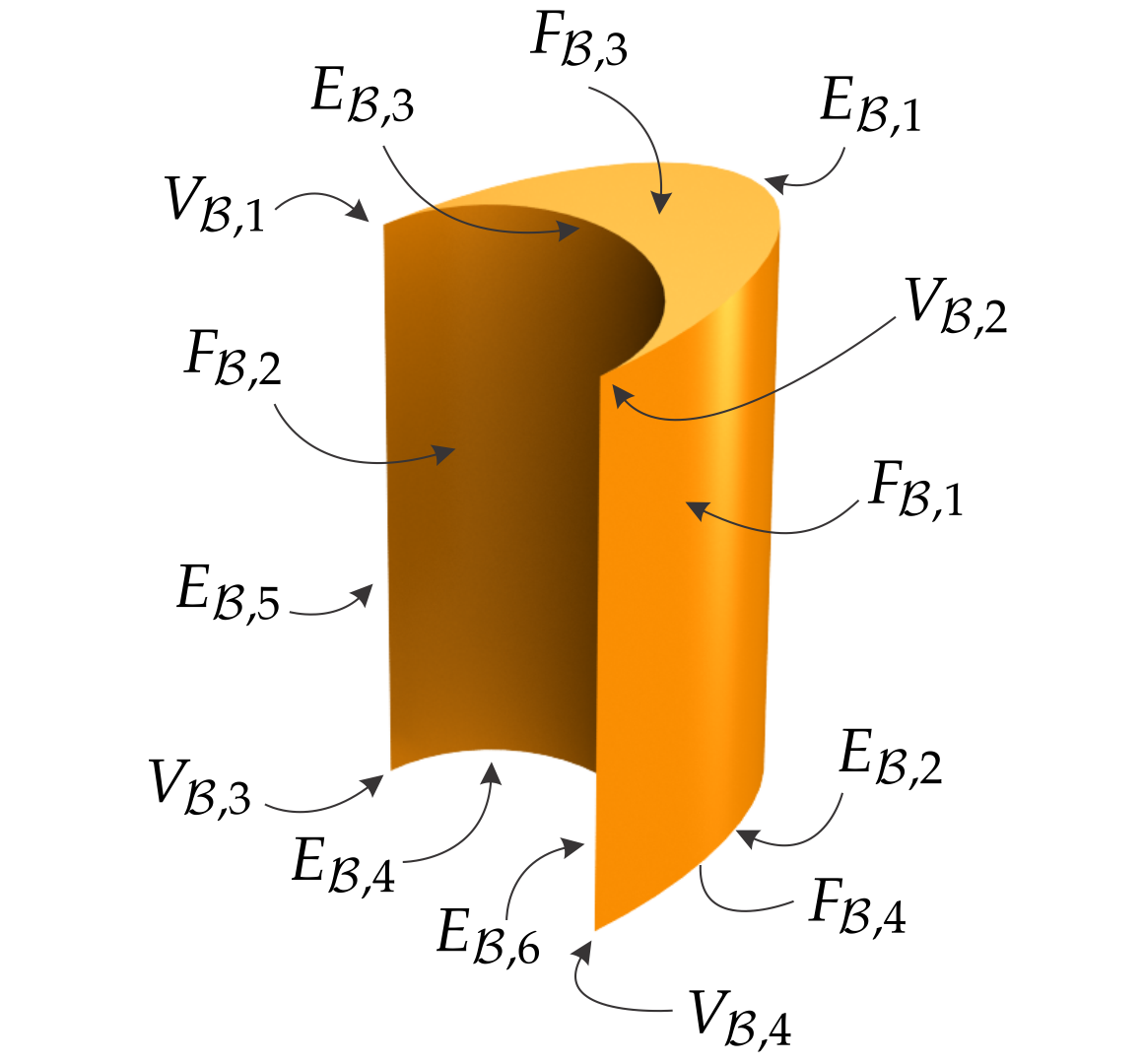}\\ \vspace{-1ex}
    {\scriptsize (b)}
  \end{minipage}
  \end{minipage}
\caption{Different types of boundary elements of two CQMs, where
$F,E$ and $V$ stand for face, edge and vertex, respectively.}
\label{fig:CQM_objects}
\end{figure}

\section{Solving CCD subproblems}\label{sect:solveCCD}

For a pair of extended boundary elements, 
each from CQMs ${\mathcal Q}_A(t)$ and ${\mathcal Q}_B(t)$,
respectively, the next step
is to solve their CCD and compute the first contact time instant with the
corresponding contact point.
There is a hierarchy of extended boundary elements, from faces to
vertices, in different dimensions.  Each element type also consists
of more than one kind of primitives; for instance, a face may either be 
a quadric face or be a planar face.
A complete classification of the types of element pairs that should be
considered for CCD of two CQMs is listed in Table~\ref{tab:summary}.
In this section, we shall present the techniques for resolving
CCD of these cases.

\begin{table}
\centering
\caption{Complete classification of different types of element pairs of two CQMs
and the technique for solving the corresponding CCD.  
Note that CCD between two planes and CCD between a plane and a line can be exempted and therefore are not listed here.}
\label{tab:summary}
\begin{tabular}{rcccc}
\hline
Type \rule{2.8em}{0pt}& Case & Element pairs & Techniques & Section \\ \hline
$(F,F)$ \rrone
        & \casequadricsquadrics & Quadrics vs. Quadrics & CCD of quadrics & \ref{sect:ccdquadrics}\\
        \rrtwo
        & \casequadricsplanes & Quadrics vs. Planes &  CCD of quadrics/planes & \ref{sect:ccd_quadrics_planes}\\
$(F,E)$ \rrone
        & \casequadricsconics & Quadrics vs. Conics & Dimension reduction to Case~\caseconicsconicstwo & \ref{sect:CCD_quadrics_conics}\\
        \rrtwo
        & \caseplanesconics & Planes vs. Conics & Dimension reduction to Case~\caseconicslinestwo & \ref{sect:CCD_planes_conics}\\
        \rrtwo
        & \casequadricslines & Quadrics vs. Lines & Direct substitution & \ref{sect:CCD_quadrics_lines}\\
$(F,V)$ \rrone
        & \casequadricsvertices & Quadrics/Planes vs. Vertices & Direct substitution & \ref{sect:CCD_quadrics_vertices}\\
$(E,E)$ \rrone
        & \caseconicsconicsthree & Conics vs. Conics in $\mathbb{R}^3$& Dimension reduction & \ref{sect:CCD_space_conics}\\
		\rrtwo
		& \caseconicsconicstwo & Conics vs. Conics in $\mathbb{R}^2$& CCD of conics in ${\mathbb R}^2$ &
\ref{sect:CCD_conics_2D}\\
        \rrtwo
        & \caseconicslinesthree & Conics vs. Lines in $\mathbb{R}^3$& Dimension reduction & \ref{sect:CCD_conics_lines_3D}\\
        \rrtwo
        & \caseconicslinestwo & Conics vs. Lines in $\mathbb{R}^2$& Direct substitution & \ref{sect:CCD_conics_lines_2D}\\
        \rrtwo 
        & \caselineslines & Lines vs. Lines & CCD of linear primitives & \ref{sect:CCD_lines}\\
\hline
\end{tabular}
\end{table}

\label{res:plane_line}
We note here that CCD between two planes can be exempted since any planar face 
of a CQM must be delimited by some boundary curves and any possible contact of two planes can
be found by CCD between one planar face and a boundary curve of another.  
Similarly, CCD between a plane and a line can also be exempted, since any possible contact between a planar
face and a straight edge of two CQMs can be found by CCD between the boundary curve of the face and the line, or CCD between a boundary vertex of the line and the plane.
Hence, CCD for these two cases are not listed in Table~\ref{tab:summary}.

\subsection{Case~\casequadricsquadrics --- quadrics vs. quadrics}\label{sect:ccdquadrics}

In this section, we deal with CCD of two quadric surfaces in ${\mathbb R}^3$.
We assume that the quadrics are irreducible and hence they do not represent planes.
CCD between a quadric surface and a plane is discussed in Section~\ref{sect:ccd_quadrics_planes}.

Given two moving quadric surfaces in ${\mathbb R}^3$, our goal is to compute the time instants at which there is a contact between the two quadrics.  
There are three different local contact configurations between two quadrics:
surface contact, curve contact or point contact (Fig.~\ref{fig:quadrics_contact}).
%
Two quadrics have a {\em surface contact} if and only if they are identical. 
They have a {\em curve contact} if and only if they are tangent at every point along a line or conic curve.
There is a {\em point contact} if and only if they are tangent at an {\em isolated} common point.
It is also important that the quadrics do not intersect locally at the neighbourhood of all the tangent points.
Figure~\ref{fig:cylinder_intersect}(a) shows two cylinders that are tangent at a point but also intersect locally at the neigbourhood of the same point.  The tangent point therefore does not constitute a contact.  

\begin{figure}[!ht]
\centering
  \begin{minipage}{\linewidth}
  \begin{minipage}{.32\linewidth}\centering
    \includegraphics[width=\linewidth]{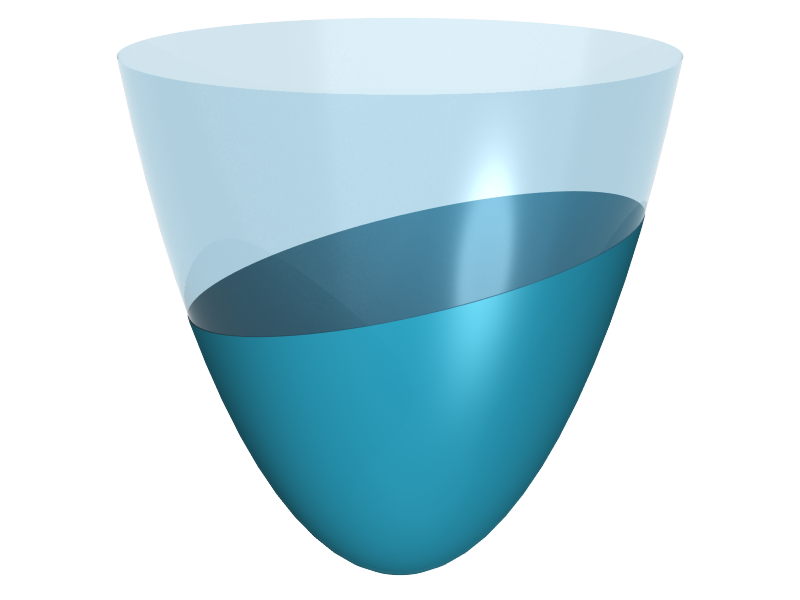}\\ \vspace{-1ex}
    {\scriptsize (a)}
  \end{minipage}\hfill
  \begin{minipage}{.32\linewidth}\centering
    \includegraphics[width=\linewidth]{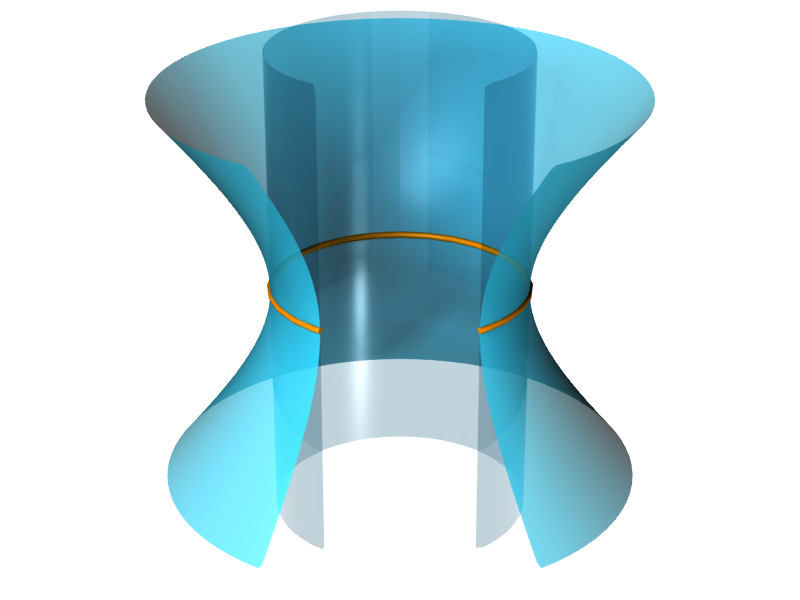}\\ \vspace{-1ex}
    {\scriptsize (b)}
  \end{minipage}\hfill
  \begin{minipage}{.32\linewidth}\centering
    \includegraphics[width=\linewidth]{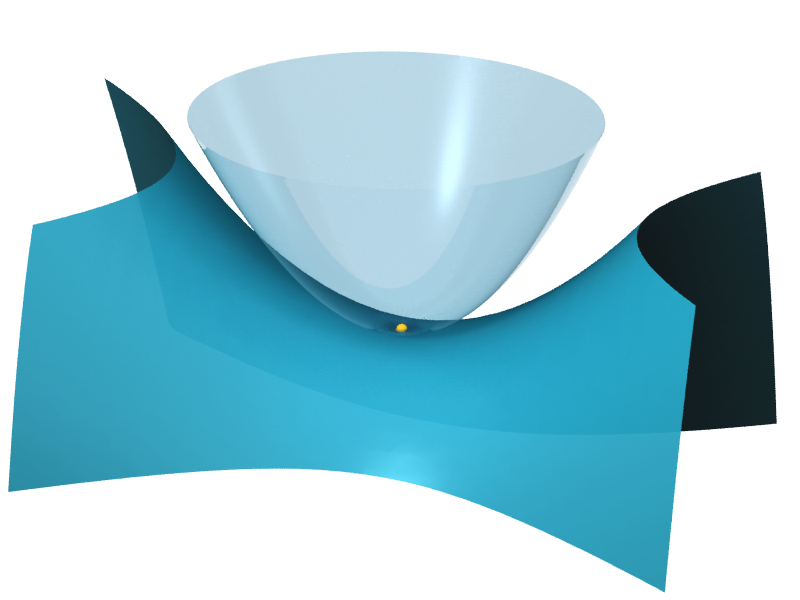}\\ \vspace{-1ex}
    {\scriptsize (c)}
  \end{minipage}
  \end{minipage}
\caption{The three different local contact configurations between two quadrics: (a) surface contact; (b) curve contact; and (c) point contact.}
\label{fig:quadrics_contact}
\end{figure}

\begin{figure}[!ht]
\centering
\begin{minipage}{.7\linewidth}
\centering
\begin{minipage}{0.48\linewidth}
\centering
\includegraphics[width=\linewidth]{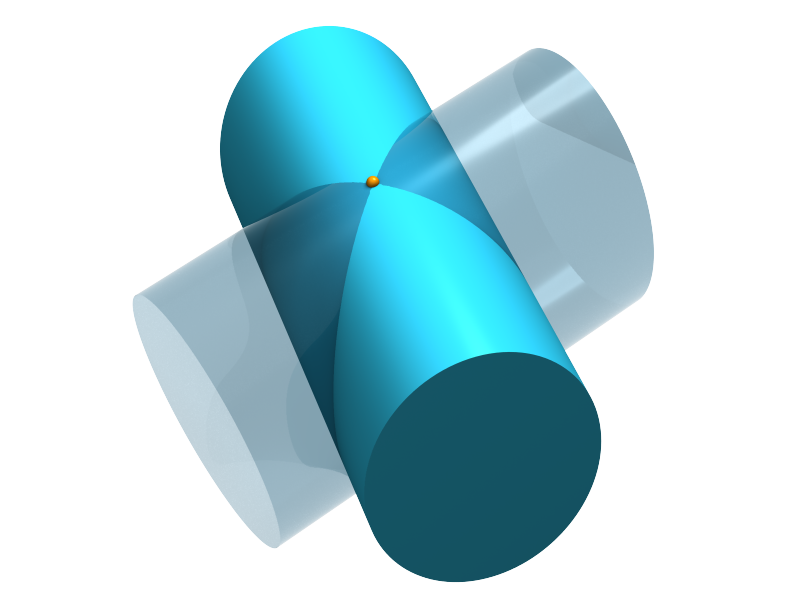}\\
{\small (a)}
\end{minipage}
\hfill
\begin{minipage}{0.48\linewidth}
\centering
\includegraphics[width=\linewidth]{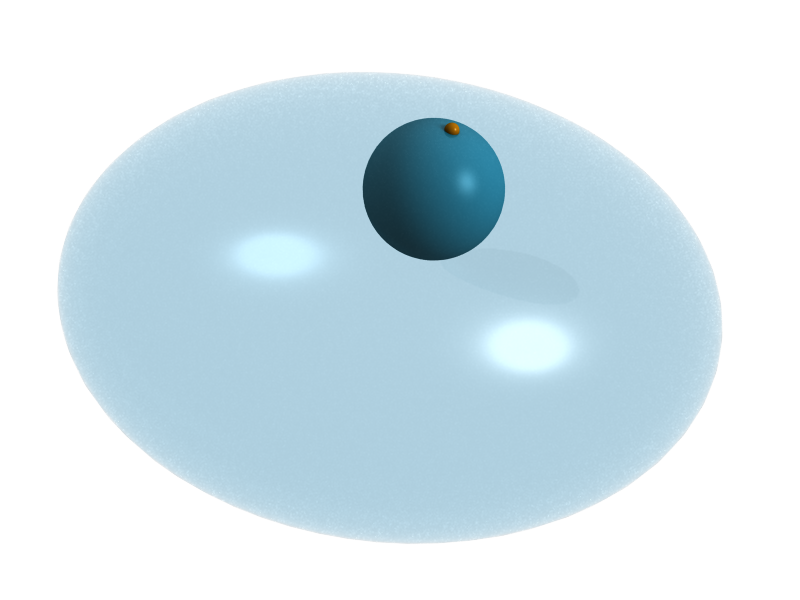}\\
{\small (b)}
\end{minipage}
\end{minipage}
\caption{(a) Two cylinders intersect locally at the neighbourhood of the tangent point.
(b) There is no local intersection between two ellipsoids at the isolated tangent point.
Hence, the ellipsoids in (b) are in contact while the cylinders in (a) are not.}
\label{fig:cylinder_intersect}
\end{figure}

Let $X=(x,y,z,w)^T\in\mathbb{PR}^3$ and let two moving quadrics be given by 
$\mathcal{A}(t): X^{T}A(t)X=0$ and $\mathcal{B}(t): X^{T}B(t)X=0$, where
$A(t), B(t)$ are $4\times4$ matrices with elements as functions in
$t$.  The two quadrics define a moving pencil $\mathcal{Q}(\lambda;t): X^T \big( \lambda (A(t)-B(t)) \big) X = 0$,
with characteristic polynomial $f(\lambda;t)=\det\big(\lambda A(t)-B(t)\big)$.
We shall differentiate the cases in which the pencil $\mathcal{Q}(t)$ is (1) in general nondegenerate (i.e., $f(\lambda;t)\not\equiv 0$ for some $t$), or (2) always degenerate (i.e., $f(\lambda;t)\equiv 0$ for all $t$), 
and handle these two cases in different manners to be described in Section~\ref{sect:pencil_not_always_degenerate} and~\ref{sect:pencil_always_degenerate}, respectively.

%
%
%
%
%
%
%
%
%
%
%
%

\subsubsection{For $\mathcal{A}(t)$ and $\mathcal{B}(t)$ whose pencil is in general nondegenerate}
\label{sect:pencil_not_always_degenerate}

In this section, we consider two moving quadrics 
$\mathcal{A}(t)$ and $\mathcal{B}(t)$ whose pencil is in general nondegenerate, that is, 
their characteristic polynomial $f(\lambda)=\det\big(\lambda A(t)-B(t)\big)$
is not always identically zero over the time domain.

The morphologies of the intersection curves of two quadric surfaces (QSICs) in $\mathbb{PR}^3$ 
have been completely classified in~\cite{TuWMW2009}.
For two moving quadrics, the morphologies of their QSIC may change over time, and only some QSICs 
may correspond to a contact between the quadrics.
Therefore, our strategy is to first detect the time instants  (which we called the {\em candidate time instants}) at which 
two moving quadric surfaces have a change in their QSIC.
Our next step is then to identify whether a QSIC corresponds to a real contact (face, line or point contact) in $\mathbb{R}^3$ 
at each of the candidate time instants and to compute the contact between the two quadrics.

\paragraph{Determining candidate contact time instants}
The candidate time instants are the moments at which the QSIC of two moving quadrics change its morphological type.
To determine the candidate time instants, we make use of our recent result in detecting the variations of the QSIC of two moving quadrics in $\mathbb{PR}^3$.
Here, we give a brief idea of how this can be done and refer the reader to~\cite{JiaWC2012} for the details.
The classification by~\cite{TuWMW2009} distinguishes the QSIC types of two quadrics in $\mathbb{PR}^3$ from both algebraic and topological points of view (including singularities, number of components, and the degree of each irreducible component).
A QSIC type can be identified by the signature sequence and the Segre characteristics (\cite{Bromwich1906}) of the quadric pencil $Q(\lambda;t) = \lambda A(t) - B(t)$, which are in turn characterized by the algebraic properties of the roots of characteristic polynomial of $Q(\lambda;t)$, 
such as the number of real roots, the multiplicity of each root, and the type of the Jordan blocks associated with each root.
We proved that to detect all the time instants at which the QSIC changes is equivalent to detecting the time instants when the Segre characteristic of $Q(\lambda;t)$ changes.
This leads to an algebraic method using the techniques of resultants and Jordan forms to compute all the required time instants,
which, in our case, will serve as the candidate time instants for the next step.

\begin{remark}\label{rem:exact_solution}
With the aforementioned algebraic method, we obtain univariate equations defining the candidate time instants. The positive real solutions of such univariate equations can be computed efficiently by real root isolation solvers. Checking the sign of a polynomial expression at such a root can be done exactly by algebraic methods (see for example~\cite{BPR2006}).
\end{remark}

\paragraph{Identifying real contact}

For each of the candidate contact time instants $t_i$, the next step 
is to determine whether the QSIC corresponds to any real contact between the quadrics 
$\mathcal{A}(t_i)$ and $\mathcal{B}(t_i)$.
Table~\ref{tab:QSIC} is adapted from the three classification tables in~\cite{TuWMW2009} by showing only those cases\footnote{We keep the original case numbers for ease of reference.} in which the QSIC of two distinct quadrics corresponds to a point or a curve contact in $\mathbb{PR}^3$. 
It therefore encompasses all possible contact configurations of two quadrics.
The case of a face contact, that is, the two quadrics being identical, can be trivially identified and is therefore skipped here.
Now, for each candidate time instant $t_i$, we compute the signature sequence
of $\mathcal{A}(t_i)$ and $\mathcal{B}(t_i)$.
The quadrics have a contact in $\mathbb{PR}^3$ if and only if the sequence matches 
one of the 12 cases listed in Table~\ref{tab:QSIC}.
The following example shows how we may identify if two quadrics have a contact at a particular time instant by checking their Segre characteristics and signature sequence against Table~\ref{tab:QSIC}.

\begin{example}
Consider two cylinders, ${\mathcal A}: x^2 + z^2 = 1$ and ${\mathcal B}: y^2 + z^2 = 1$,
which have two singular intersection points as shown in Figure~\ref{fig:cylinder_intersect}(a).
The characteristic equation is $f(\lambda) = \lambda(\lambda-1)^2 = 0$.
The Segre characteristics is $[(11)11]_3$ and the signature sequence is $(2,((1,1)),2,(1,2),1,(1,2),2)$ which corresponds to case 13 of~\cite{TuWMW2009} in which the QSIC has two conics intersecting at two distinct non-isolated singular points.  
This case does not correspond to a contact configuration and is not listed in Table~\ref{tab:QSIC}.  The two cylinders are therefore not in contact.
\qed
\end{example}

\begin{table}[htbp]\caption{QSIC corresponding to a point contact or curve contact between two distinct quadrics in $\mathbb{PR}^3$.
In the illustrations, a solid line or curve represents a real component while a dashed one represents
an imaginary component. A solid dot indicates a real singular point.
A null-homotopic component is drawn as a closed loop,
and a non-null-homotopic component is shown as an open-ended curve.
A line or curve that is counted twice is thickened.
See~\cite{TuWMW2009} for details.
}\label{tab:QSIC}
\begin{center}
\begin{scriptsize}
\begin{tabular}{l|l|c|l|l}
\hline $[{\bf Segre}]_r$&\raisenumber{-2.8pt}{2.5 ex}{Case
\#}\hspace{-30pt}\rule[1.7 ex]{31pt}{0.3pt} \hspace{-5.3pt} \rule[1.7 ex]{0.3pt}{9.5pt}&&\\

$r$ = the \# &{\bf Index}&{\bf Signature Sequence}&{\bf Illus-}&{\bf Representative}\\
of real roots &{\bf Sequence}&&{\bf tration}& {\bf Quadric Pair}\\
\hline
%
%
\Lower{$[211]_3$}&\raisenumber{-2.8pt}{3.0 ex}{6 }\boundnum{2.2 ex}
{$\langle 1 {\wr\wr}_{-} 1 | 2 |3 \rangle$} &
(1,((1,2)),1,(1,2),2,(2,1),3) &\begin{minipage}{0.075\textwidth}
\centering
\includegraphics[width=\linewidth]{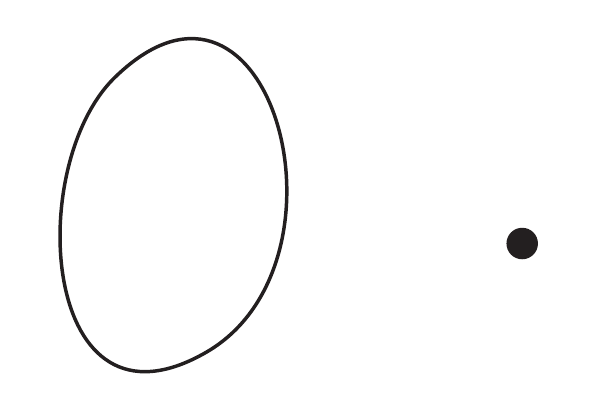}
\end{minipage}&
\begin{minipage}{0.18\textwidth}
\leftline{${\mathcal A}:\; -x^2 - z^2 +
2yw =0$} \leftline{${\mathcal B}:\; -3x^2 + y^2 - z^2
=0$}\end{minipage}
\\ \cline{2-5}
%
%
& \raisenumber{-2.8pt}{3.0 ex}{7 }\boundnum{2.2 ex} {$\langle 1
{\wr\wr}_{+} 1 | 2 |3 \rangle$} & (1,((0,3)),1,(1,2),2,(2,1),3)
&\begin{minipage}{0.075\textwidth}
\centering
\includegraphics[width=\linewidth]{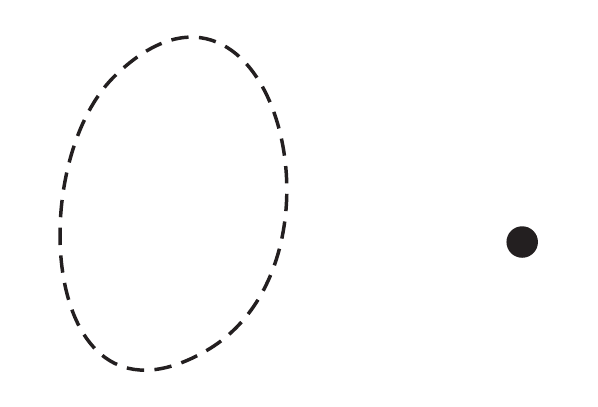}
\end{minipage}&
\begin{minipage}{0.18\textwidth}
\leftline{${\mathcal A}:\; x^2 + z^2 + 2yw
=0$} \leftline{${\mathcal B}:\; 3x^2 + y^2 + z^2 =0$}\end{minipage}
\\ \hline
%
%
$[(11)11]_3$ &\raisenumbertwo{-6.8pt}{3.0 ex}{15}\boundnum{2.2 ex} {$\langle 1 ||
1 | 2 | 3 \rangle$} &(1,((0,2)),1,(1,2),2,(2,1),3)
&\begin{minipage}{0.075\textwidth} 
\centering
\includegraphics[width=\linewidth]{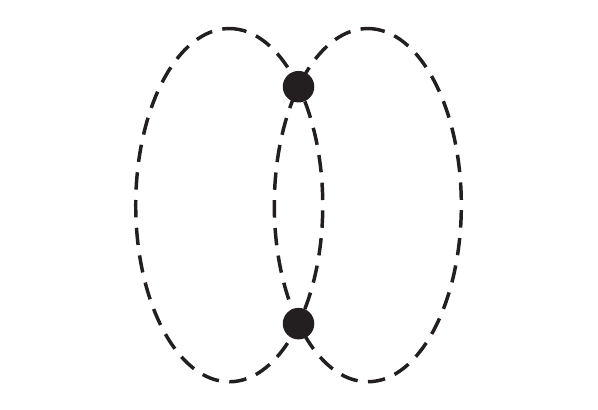}
\end{minipage}&
\begin{minipage}{0.26\textwidth}
\leftline{${\mathcal A}:\; x^2 + y^2 +z^2
- w^2=0$} \leftline{${\mathcal B}:\; x^2 + 2y^2=0$}\end{minipage}
\\ \hline
%
%
$[(111)1]_2$&\raisenumbertwo{-6.8pt}{3.0 ex}{19}\boundnum{2.2 ex} {$\langle 1 ||| 2 | 3 \rangle$} &
(1,(((0,1))),2,(2,1),3) &\begin{minipage}{0.075\textwidth}
\centering
\includegraphics[width=\linewidth]{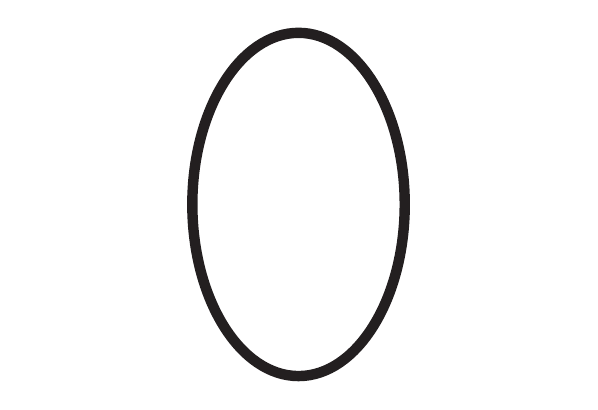}
\end{minipage}&
\begin{minipage}{0.26\textwidth}
\leftline{${\mathcal A}:\; y^2 + z^2 -w^2
=0$} \leftline{${\mathcal B}:\; x^2 =0$}\end{minipage}
\\  \hline
%
%
$[(21)1]_2$ &\raisenumbertwo{-6.8pt}{3.0 ex}{22}\boundnum{2.2 ex} {$\langle 1
{\wr\wr}_{+}| 2 | 3 \rangle$} & (1,(((0,2))),2,(2,1),3)
&\begin{minipage}{0.075\textwidth} 
\centering
\includegraphics[width=\linewidth]{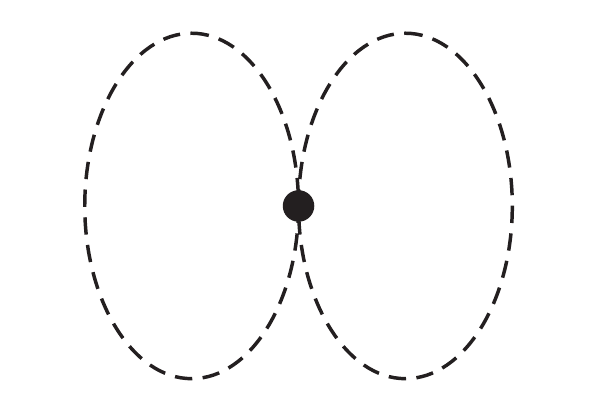}
\end{minipage}&
\begin{minipage}{0.26\textwidth}
\leftline{${\mathcal A}:\; y^2 - z^2 +2zw
=0$} \leftline{${\mathcal B}:\; x^2 + z^2 =0$}\end{minipage}
\\ \hline 
%
%
\Lower{$[2(11)]_2$}&\raisenumbertwo{-6.8pt}{3.0 ex}{24}\boundnum{2.2 ex}
{$\langle 1 {\wr\wr}_{-} 1 || 3 \rangle$} &
(1,((1,2)),1,((1,1)),3)&\begin{minipage}{0.075\textwidth}
\centering
\includegraphics[width=\linewidth]{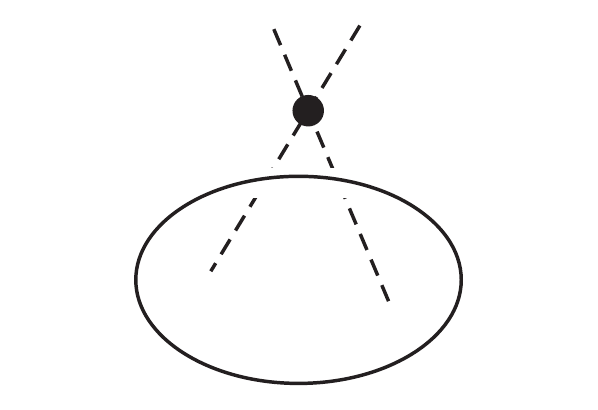}
\end{minipage}&
\begin{minipage}{0.26\textwidth}
\leftline{${\mathcal A}:\; 2xy - y^2=0$}
\leftline{${\mathcal B}:\; y^2  - z^2 - w^2=0$}\end{minipage}
\\  \cline{2-5}
%
%
&\raisenumbertwo{-6.8pt}{3.0 ex}{25}\boundnum{2.2 ex} {$\langle 1
{\wr\wr}_{+} 1 || 3 \rangle$}& (1,((0,3)),1,((1,1)),3)
&\begin{minipage}{0.075\textwidth} 
\centering
\includegraphics[width=\linewidth]{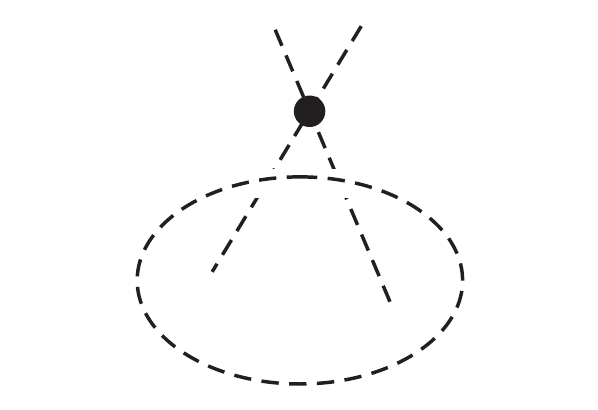}
\end{minipage}&
\begin{minipage}{0.26\textwidth}
\leftline{${\mathcal A}:\; 2xy - y^2=0$}
\leftline{${\mathcal B}:\; y^2 + z^2 + w^2=0$}\end{minipage}
\\
\hline
%
%
$[(11)(11)]_2$ &\raisenumbertwo{-6.8pt}{3.0 ex}{30}\boundnum{2.2 ex} {$\langle 1 || 1 || 3
\rangle$} &(1,((0,2)),1,((1,1)),3)
&\begin{minipage}{0.075\textwidth} 
\centering
\includegraphics[width=\linewidth]{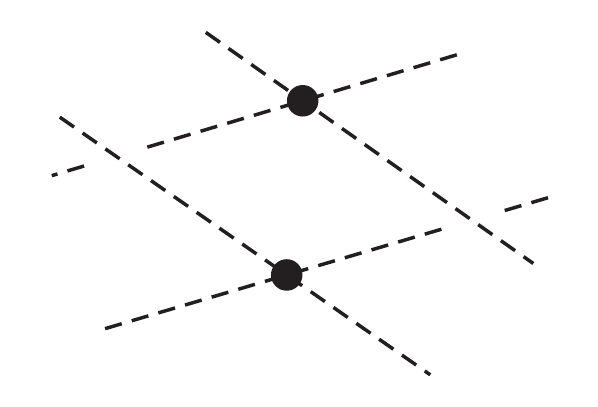}
\end{minipage}&
\begin{minipage}{0.26\textwidth}
\leftline{${\mathcal A}:\; x^2 + y^2 =0$}
\leftline{${\mathcal B}:\; z^2-w^2=0$}\end{minipage}
\\  \hline
%
%
%
\Lower{$[(211)]_1$}&\raisenumbertwo{-6.8pt}{3.0
ex}{32}\boundnum{2.2 ex} {$\langle 2 {\wr\wr}_{-}|| 2 \rangle$}  &
(2,((((1,0)))),2)&\begin{minipage}{0.075\textwidth}
\centering
\includegraphics[width=\linewidth]{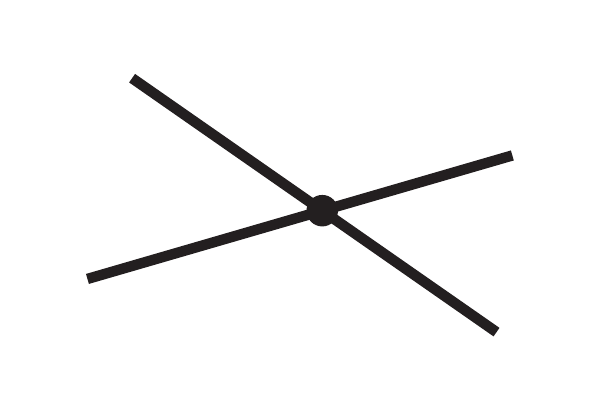}
\end{minipage}&
\begin{minipage}{0.26\textwidth}
\leftline{${\mathcal A}:\; x^2-y^2+2zw
=0$} \leftline{${\mathcal B}:\; z^2=0$}\end{minipage}
\\ \cline{2-5}
%
%
&\raisenumbertwo{-6.8pt}{3.0 ex}{33}\boundnum{2.2 ex} {$\langle 1
{\wr\wr}_{-}|| 3 \rangle$}  & (1,((((1,0)))),3)
&\begin{minipage}{0.075\textwidth} 
\centering
\includegraphics[width=\linewidth]{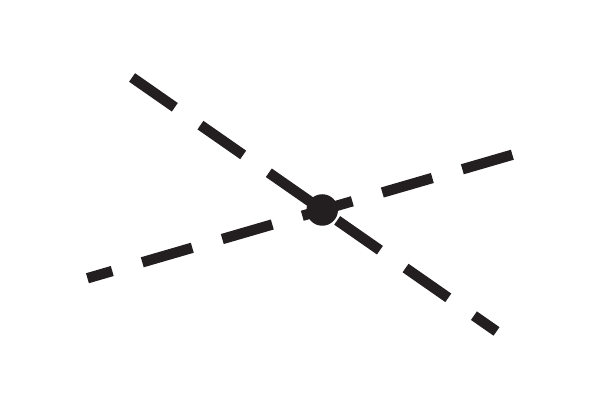}
\end{minipage}&
\begin{minipage}{0.26\textwidth}
\leftline{${\mathcal A}:\; x^2+y^2+2zw
=0$} \leftline{${\mathcal B}:\; z^2=0$}\end{minipage}
\\ \hline 
%
%
%
\Lower{$[(22)]_1$}&\raisenumbertwo{-6.8pt}{3.0 ex}{34}\boundnum{2.2
ex} {$\langle 2 \widehat{\wr\wr}_{-}\widehat{\wr\wr}_{-} 2 \rangle$}
& (2,((((2,0)))),2) &\begin{minipage}{0.075\textwidth}
\centering
\includegraphics[width=\linewidth]{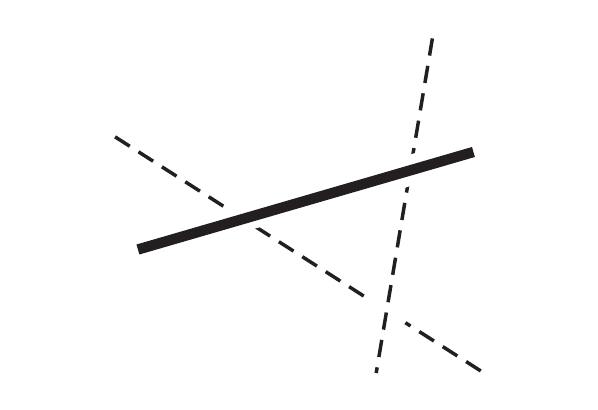}
\end{minipage}&
\begin{minipage}{0.26\textwidth}
\leftline{${\mathcal A}:\; xy + zw =0$}
\leftline{${\mathcal B}:\; y^2 + w^2=0$}\end{minipage}
\\ \cline{2-5}
%
%
&\raisenumbertwo{-6.8pt}{3.0 ex}{35}\boundnum{2.2 ex} {$\langle 2
\widehat{\wr\wr}_{-}\widehat{\wr\wr}_{+} 2 \rangle$} &
(2,((((1,1)))),2) &\begin{minipage}{0.075\textwidth}
\centering
\includegraphics[width=\linewidth]{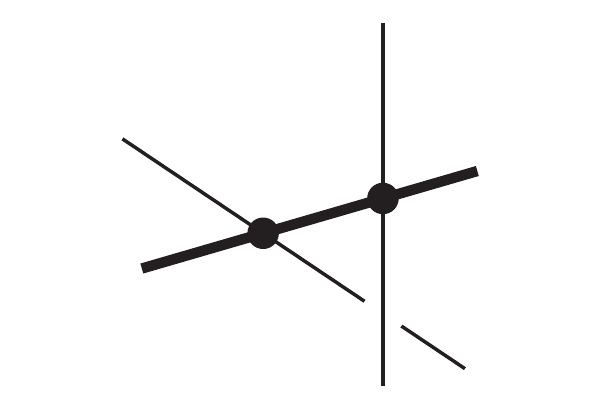}
\end{minipage}&
\begin{minipage}{0.26\textwidth}
\leftline{${\mathcal A}:\; xy - zw =0$}
\leftline{${\mathcal B}:\; y^2 - w^2=0$}\end{minipage}
\\ \hline 
\end{tabular}
\end{scriptsize}
\end{center}
\end{table}

\paragraph{Computing contact}
We can proceed to compute a contact once it is identified.
The following lemma provides a means to computing the contact points of two quadrics at a particular time instant:

\begin{lemma}\label{lem:cqm_touch_config}
Let ${\mathcal A}: X^TAX=0$ and ${\mathcal B}: X^TBX=0$ be two distinct, irreducible quadric surfaces whose pencil is nondegenerate.
Suppose that ${\mathcal A}$ and ${\mathcal B}$ are in contact (i.e., whose QSIC is listed in Table~\ref{tab:QSIC}), and 
let $\lambda_0$ be a multiple root of $f(\lambda)=\det(\lambda A-B) = 0$,
the characteristic equation of ${\mathcal A}$ and ${\mathcal B}$.
Then, we have the following cases:
\begin{enumerate}
\item If $rank(\lambda_0 A - B) = 3$, $\lambda_0$ corresponds to
      one singular intersection point $\mathbf p$ of $\mathcal A$ and $\mathcal B$ in $\mathbb{PR}^3$.  If $\lambda_0 \not=0$, $\mathcal A$ and $\mathcal B$ are tangential at $\mathbf p$; otherwise, $\mathcal B$ is a cone with $\mathbf p$ as its apex which lies also on $\mathcal A$.
\item If $rank(\lambda_0 A - B) = 2$, $\lambda_0$ corresponds to singular intersection between $\mathcal A$ and $\mathcal B$ that happens at 
	  either one point, two distinct points, or 
	  along a straight line in $\mathbb{PC}^3$, where $\mathcal A$ and $\mathcal B$ are tangential to each other.
\item If $rank(\lambda_0 A - B) = 1$, $\lambda_0$ corresponds to
      singular intersection between $\mathcal A$ and $\mathcal B$
      along a conic curve 
      (which can be a reducible one) 
      in $\mathbb{PC}^3$, where $\mathcal A$ and $\mathcal B$ are tangential to each other.
\end{enumerate}
\end{lemma}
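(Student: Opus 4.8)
The plan is to analyze the singular locus of the pencil member $\lambda_0 A - B$ according to its rank, using the classical theory of quadric pencils over $\mathbb{PC}^3$ together with the contact classification already recorded in Table~\ref{tab:QSIC}. The starting observation is that a point $\mathbf{p}$ is a singular point of the degenerate quadric $\mathcal{C}_0 : X^T(\lambda_0 A - B)X = 0$ exactly when $(\lambda_0 A - B)\mathbf{p} = \mathbf{0}$, so the projectivized null space of $\lambda_0 A - B$ — which has dimension $4 - \mathrm{rank}(\lambda_0 A - B)$ — is precisely the vertex space of $\mathcal{C}_0$. Moreover, since $f(\lambda) = \det(\lambda A - B)$ and $\lambda_0$ is a \emph{multiple} root, the drop in rank is at least the expected amount, and a standard fact about pencils is that the vertex of any pencil member lies on \emph{every} member of the pencil, in particular on both $\mathcal{A}$ and $\mathcal{B}$; this is what lets us transfer statements about $\mathcal{C}_0$ into statements about the intersection $\mathcal{A}\cap\mathcal{B}$. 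I would first isolate this ``vertex lies on the whole pencil'' lemma (it follows immediately by writing $B = \lambda_0 A - (\lambda_0 A - B)$ and applying both matrices to a null vector), and then treat the three ranks in turn.

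For rank $3$: the null space is a single projective point $\mathbf{p}$, which is the unique vertex of the cone $\mathcal{C}_0$ and lies on both $\mathcal{A}$ and $\mathcal{B}$. Being a singular point of a member of the pencil, $\mathbf{p}$ is a singular point of the QSIC, so $\mathcal{A}$ and $\mathcal{B}$ share the tangent plane there — \emph{unless} the singularity comes from one of the quadrics itself being singular at $\mathbf{p}$. Here the case $\lambda_0 = 0$ is special: then $\lambda_0 A - B = -B$, so $\mathrm{rank}(B) = 3$ and $\mathbf{p}$ is the apex of the cone $\mathcal{B}$, which happens to lie on $\mathcal{A}$; for $\lambda_0 \neq 0$, neither $\mathcal{A}$ nor $\mathcal{B}$ is forced to be singular (indeed Table~\ref{tab:QSIC} cases with Segre symbol $[211]_3$, $[(11)11]_3$ have smooth representatives), so the singularity of the QSIC at $\mathbf{p}$ is genuine tangential contact. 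For rank $2$: the vertex space is a projective line $\ell$, and $\mathcal{C}_0$ is a pair of (possibly complex, possibly coincident) planes through $\ell$; the singular locus of $\mathcal{A}\cap\mathcal{B}$ meeting $\ell$ gives — reading off from the relevant Segre types in Table~\ref{tab:QSIC} ($[(111)1]_2$, $[(21)1]_2$, $[2(11)]_2$, $[(11)(11)]_2$) — either one point, two points, or a whole line of tangency, and along that locus the common tangent plane of $\mathcal{A}$ and $\mathcal{B}$ is the tangent plane to the planar member, so the quadrics are tangent there. For rank $1$: $\lambda_0 A - B$ is a rank-one form, i.e.\ (a scalar times) a squared linear form $L^2$, so $\mathcal{C}_0$ is the doubled plane $L = 0$; then $\mathcal{A} \cap \mathcal{B}$ lies in the plane $L = 0$ (since on $\mathcal{A}$ we have $X^TBX = \lambda_0 X^TAX - (cL)^2 = -(cL)^2$, forcing $L=0$ on $\mathcal{A}\cap\mathcal{B}$), and intersecting a plane with a quadric yields a conic, possibly reducible; along this conic $\mathcal{A}$ and $\mathcal{B}$ are tangent because both restrict, to second order, to $-(cL)^2$ transverse to the plane. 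The corresponding Table~\ref{tab:QSIC} entries are the $[(211)]_1$ and $[(22)]_1$ cases.

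The main obstacle, and where I would spend the most care, is establishing \emph{tangency} (rather than mere incidence) along the singular locus in each case — i.e., showing that at a shared singular point of the QSIC the two quadrics actually have the same tangent plane, and handling the degenerate sub-possibilities (coincident planes, reducible conics, an isolated versus non-isolated point) uniformly. The cleanest route is: at a common point $\mathbf{p}$ of $\mathcal{A} \cap \mathcal{B}$, the tangent plane to $\mathcal{A}$ is $(A\mathbf{p})^T X = 0$ and to $\mathcal{B}$ is $(B\mathbf{p})^T X = 0$; these coincide iff $A\mathbf{p}$ and $B\mathbf{p}$ are proportional, and since $(\lambda_0 A - B)\mathbf{p}$ lies in the (small) column space determined by the low rank of $\lambda_0 A - B$ — in fact equals $\mathbf{0}$ when $\mathbf{p}$ is in the vertex space — we get $B\mathbf{p} = \lambda_0 A\mathbf{p}$ exactly, giving proportionality with factor $\lambda_0$. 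This argument simultaneously delivers the tangency in all three rank cases and also explains the $\lambda_0 = 0$ exception in case~1: then $B\mathbf{p} = \mathbf{0}$, so $\mathcal{B}$ is singular at $\mathbf{p}$ (its apex) and there is no well-defined tangent plane of $\mathcal{B}$ to match. The remaining bookkeeping — that the vertex space has the asserted dimension, and that the QSIC's singular set is exactly the appropriate slice — is then just a matter of citing the Segre-symbol/QSIC-morphology correspondence from \cite{TuWMW2009} already invoked in the text, restricted to the twelve contact cases of Table~\ref{tab:QSIC}.
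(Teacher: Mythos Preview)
Your tangency argument is clean and correct: once you know a point $\mathbf{p}$ lies on both $\mathcal{A}$ and $\mathcal{B}$ \emph{and} satisfies $(\lambda_0 A - B)\mathbf{p}=0$, the relation $B\mathbf{p}=\lambda_0 A\mathbf{p}$ immediately gives coinciding tangent planes (and explains the $\lambda_0=0$ exception). Your treatment of rank~$1$ via the squared linear form is also essentially the paper's argument in different clothing.

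The gap is in the \emph{incidence} step for rank~$3$. Your ``standard fact'' that the vertex of a singular pencil member lies on every member of the pencil is false, and your parenthetical justification does not prove it: from $(\lambda_0 A - B)\mathbf{p}=0$ you only get $\mathbf{p}^T B\mathbf{p}=\lambda_0\,\mathbf{p}^T A\mathbf{p}$, which says nothing about either side vanishing. A concrete counterexample: with $A=\mathrm{diag}(1,1,1,-1)$ and $B=\mathrm{diag}(2,1,1,-1)$, the member at $\lambda_0=1$ is $\mathrm{diag}(-1,0,0,0)$, of rank~$1$ with vertex containing $(0,0,0,1)$, which does not lie on $\mathcal{A}$. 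More to the point, for a rank-$3$ member coming from a \emph{simple} root the vertex need not lie on $\mathcal{A}$ at all; it is precisely the hypothesis that $\lambda_0$ is a \emph{multiple} root that forces $\mathbf{p}\in\mathcal{A}\cap\mathcal{B}$, and your argument never uses that hypothesis in Case~1.

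What is actually needed there is the Jordan structure: since $\lambda_0$ is a multiple root but $\mathrm{rank}(\lambda_0 A-B)=3$, the eigenvalue $\lambda_0$ of $A^{-1}B$ has a Jordan block of size at least~$2$, so there is a generalized eigenvector $X_1$ with $A^{-1}(\lambda_0 A-B)X_1$ proportional to $X_0$ and $\bigl(A^{-1}(\lambda_0 A-B)\bigr)^2 X_1=0$. Writing $X_0$ in terms of $X_1$ and using symmetry of $A$ and $\lambda_0 A-B$ then yields
\[
X_0^{T}AX_0 \;=\; X_1^{T}(\lambda_0 A-B)\,A^{-1}\,(\lambda_0 A-B)\,X_1 \;=\; X_1^{T}A\bigl(A^{-1}(\lambda_0 A-B)\bigr)^{2}X_1 \;=\; 0,
\]
so $X_0\in\mathcal{A}$ (and then $X_0\in\mathcal{B}$ as well). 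This is exactly how the paper proceeds; without it, Case~1 is unproved. In Cases~2 and~3 you avoid the issue because you (implicitly or explicitly) intersect the vertex locus with $\mathcal{A}$ first and then show those intersection points lie on $\mathcal{B}$ --- that is the correct pattern, and it is what the paper does in all three cases.
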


The proof is given in Appendix~\ref{app:proof}.

Note that when $f(\lambda)=0$ has more than one multiple root,
we should consider all its multiple roots in order to obtain all contact points between the two quadrics.
According to Lemma~\ref{lem:cqm_touch_config}, given two touching quadric surfaces at time $t_i$, the contact points are in general the solutions of $\big(\lambda_j A(t_i) - B(t_i)\big)X=0$ for each multiple root $\lambda_j$ of $f(\lambda;t_i)=0$.  
We also need to differentiate between real and imaginary contacts.
For example 
in both cases 24 and 25 of Table~\ref{tab:QSIC}, 
the characteristic equation has two multiple roots $\lambda_0$ and $\lambda_1$, with
$rank(\lambda_0 A - B)=3$ and $rank(\lambda_1 A - B)=2$;  
$\lambda_0$ corresponds to a real contact point while $\lambda_1$ corresponds to two distinct imaginary contacts
which should be discarded.

\begin{example}
Consider the unit sphere $\mathcal{A}: x^2 + y^2 + z^2 = 1$ and a cylinder 
$\mathcal{B}: x^2 + y^2 = 1$.  
The characteristic equation of $\mathcal{A}$ and $\mathcal{B}$
is $f(\lambda)= -\lambda(\lambda-1)^3$ which has a triple root $\lambda_0=1$.
Also, $rank(\lambda_0 A - B) = 1$ and by Lemma~\ref{lem:cqm_touch_config},
$\lambda_0$ corresponds to a contact along a conic curve between $\mathcal{A}$
and $\mathcal{B}$.  
Now, $(\lambda_0 A -B )X = 0$ has three linearly independent solutions
$X_0 = (0,0,0,1)^T$, $X_1 = (1,0,0,0)^T$ and $X_2 = (0,1,0,0)^T$ which span 
the plane $z=0$.
Intersecting the plane $z=0$ with $\mathcal{A}$ yields the circle $x^2+y^2=1, z=0$,
which is the contact between $\mathcal{A}$ and $\mathcal{B}$.
\qed
\end{example}

Cases 6, 24 and 35 are situations in which the quadrics are tangent at some regions but at the same time having local real intersection at the others.
Since we assume that the CQMs are separate initially and we seek their first contact, it can be assured that a local real intersection must take place after a proper contact is found.
The real intersections in these cases can therefore be ignored.

The contact points computed so far are between the quadric surfaces but not necessarily between the CQMs, 
therefore all contact points are further subject to validation to see if they are on both CQMs.
Contact points at infinity are thus discarded.
Validation details will be discussed in Section~\ref{sect:contact_validation}.

\subsubsection{For $\mathcal{A}(t)$ and $\mathcal{B}(t)$ whose pencil is always degenerate}
\label{sect:pencil_always_degenerate}

We now consider the case of two moving quadrics which always define a degenerate pencil, that is,
$f(\lambda;t)\equiv 0$ for all $t$.
Here, all members of the pencils are projective cones for all 
$t$ (\cite{Farouki1989}), which means that 
the vertices of the projective cones always lie on a common generator of the cones and the cones are always tangential along this generator in $\mathbb{PC}^3$.
Considering any affine realization of the projective space,
Figure~\ref{fig:projective_cones} depicts the three situations of this kind that are only possible in $\mathbb{R}^3$:
(a) Two moving cones whose apexes slide along the common generator, 
(b) two moving cylinders whose axes are always parallel (with their ``apexes'' at infinity),
and (c) a moving cone and a moving cylinder which always share a common generator.
Note that a cylinder can either be an elliptic, a hyperbolic or a parabolic cylinder.
For case (a), we need only to consider the contact of the vertices, since the cones are initially separate and any other contact configurations can be detected by CCD of other boundary elements of the CQMs.  Hence, 
the candidate contact time instants are the contact time instants of the vertices of the cones.
For (b),  the CCD problem is transformed to a two-dimensional CCD of two moving conics on a plane $\mathcal{P}$ 
orthogonal to the cylinder axes, with the conics being the cross-sections of the cylinders on $\mathcal{P}$.
CCD of moving conics in $\mathbb{R}^2$ will be discussed in Section~\ref{sect:CCD_conics_2D}.
We may disregard case (c) for a moving cone and a moving cylinder, since a cylinder must be delimited by a boundary curve on a CQM and any possible first contact can be captured by CCD of other boundary elements of the CQMs.

\begin{figure}[!ht]
\centering
\begin{minipage}{\linewidth}
\centering
\begin{minipage}{0.3\linewidth}
\centering
\includegraphics[width=\linewidth]{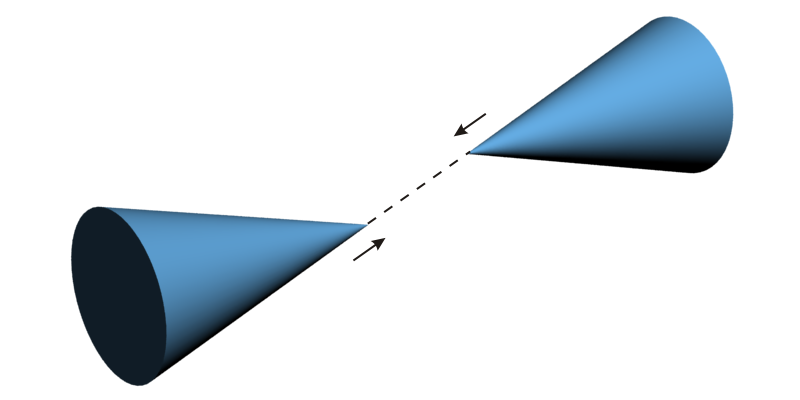}\\
{\small (a)}
\end{minipage}
\hfill
\begin{minipage}{0.3\linewidth}
\centering
\includegraphics[width=\linewidth]{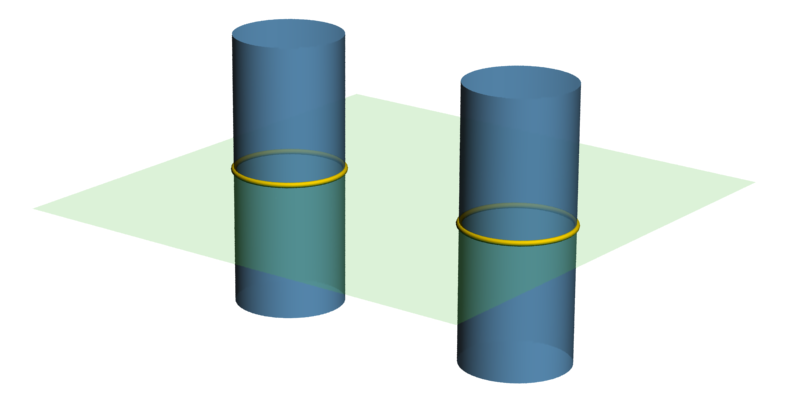}\\
{\small (b)}
\end{minipage}
\hfill
\begin{minipage}{0.3\linewidth}
\centering
\includegraphics[width=\linewidth]{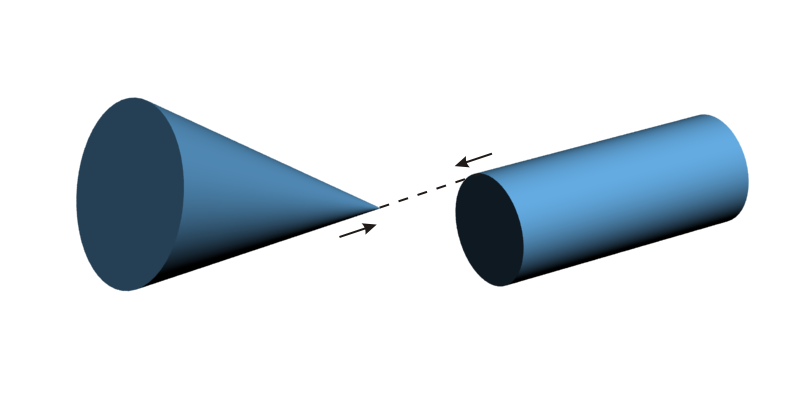}\\
{\small (c)}
\end{minipage}
\end{minipage}
\caption{The three possible scenarios of two moving quadrics in $\mathbb{R}^3$ whose pencil is always degenerate.
(a) Two moving cones whose vertices always lie on a common generator; (b) two moving cylinders whose axes are always parallel; and (c) a moving cone and a moving cylinder which always share a common generator.  Note that a cylinder may either be an elliptic, a hyperbolic or a parabolic cylinder. Also, only one nappe of a cone is shown.}
\label{fig:projective_cones}
\end{figure}

\subsection{Case~\casequadricsplanes --- quadrics vs. planes}
\label{sect:ccd_quadrics_planes}

We first note that the singular case where a plane is in contact with a cone only at its apex
is not considered here, as the contact can be determined directly by an $(F,V)$-type CCD of the apex and the plane.

Now, consider an irreducible quadric surface ${\mathcal A}(t): X^T A(t) X = 0$ and 
a plane ${\mathcal N}(t): N(t)^T X = 0$ in ${\mathbb R}^3$.
A necessary and sufficient condition for ${\mathcal N}(t)$ to be a tangent plane 
to ${\mathcal A}(t)$ at some point $X_0 \in {\mathbb R}^3$ 
is that 
\[
\begin{cases}
\;\alpha N(t) = A(t) X_0 & \text{for some nonzero $\alpha \in \mathbb R$, and} \\
\; N(t)^T X_0 = 0. & 
\end{cases}
\]
These two equations can be written as 
\begin{equation*}
\begin{pmatrix} A(t) & N(t) \\  N(t)^T & 0 \end{pmatrix}
\begin{pmatrix} X_0 \\ -\alpha \end{pmatrix} = 0,
\end{equation*}
which has a nonzero solution
$(X_0 \;\; \alpha)^T$ if
\begin{equation}\label{eqn:quadric_plane}
	\begin{vmatrix} A(t) & N(t) \\  N(t)^T & 0 \end{vmatrix} = 0.
\end{equation}
Therefore, the roots of Eq.~(\ref{eqn:quadric_plane}) corresponding to 
a solution $(X_0 \;\; \alpha)^T$ with $\alpha \not=0$ 
yield the candidate time instants 
of the quadric ${\mathcal A}(t)$ and the plane ${\mathcal N}(t)$.

\subsection{Case~\casequadricsconics --- quadrics vs. conics}
\label{sect:CCD_quadrics_conics}

We adopt a {\em dimension reduction technique}
to reduce CCD of extended
boundary elements to CCD of primitives of lower dimensions.
By doing so, we also simplify the algebraic formulations.
Figure~\ref{capped cylinders} illustrates CCD of
two moving capped elliptic cylinders. In this example, there are 
three cases to which the dimension reduction technique
can be applied, namely, quadrics vs.\ conics (Figure~\ref{capped cylinders}(b)),
planes vs.\ conics (Figure~\ref{capped cylinders}(c))
and conics vs.\ conics (Figure~\ref{capped cylinders}(d)).
The reduction for the latter two cases will be discussed in subsequent sections.

\begin{figure}[!ht]
\centering
  \begin{minipage}{\linewidth}
  \begin{minipage}{.24\linewidth}\centering
    \includegraphics[width=\linewidth]{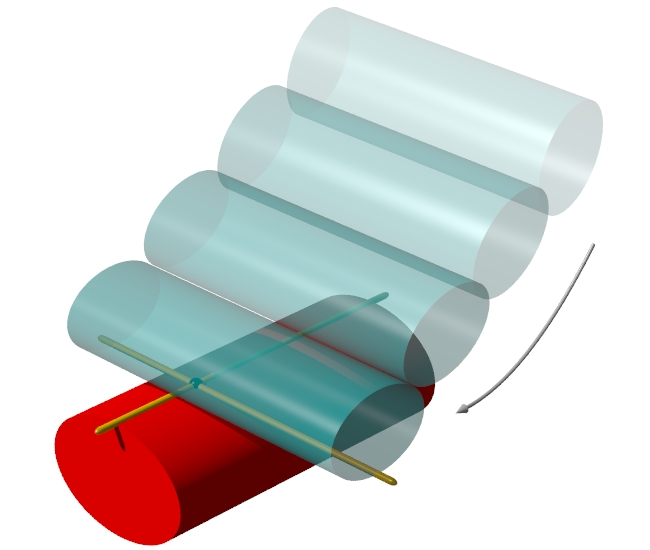}\\ 
    {\small (a)}
  \end{minipage}\hfill
  \begin{minipage}{.24\linewidth}\centering
    \includegraphics[width=\linewidth]{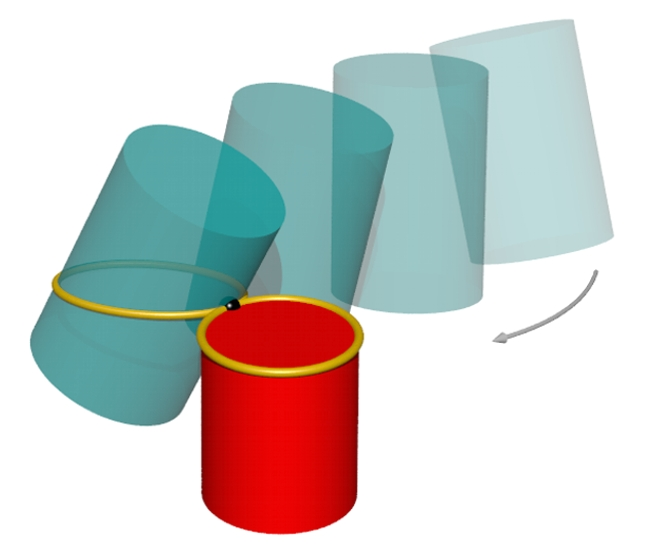}\\ 
    {\small (b)}
  \end{minipage}\hfill 
   \begin{minipage}{.24\linewidth}\centering
    \includegraphics[width=\linewidth]{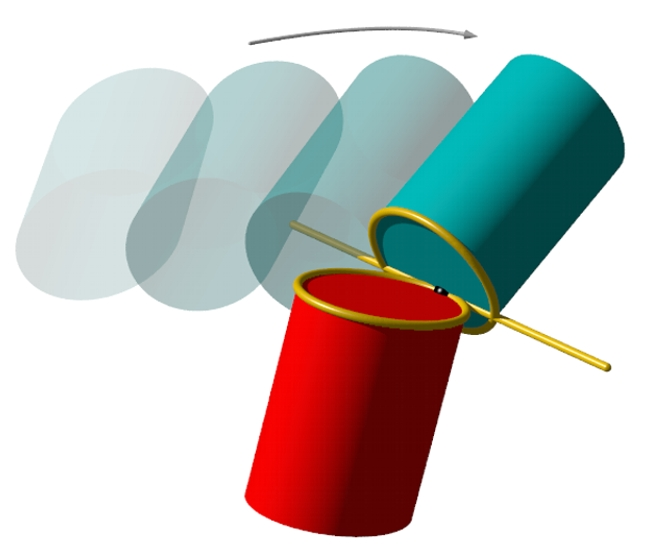}\\ 
    {\small (c)}
  \end{minipage}\hfill
  \begin{minipage}{.24\linewidth}\centering
    \includegraphics[width=\linewidth]{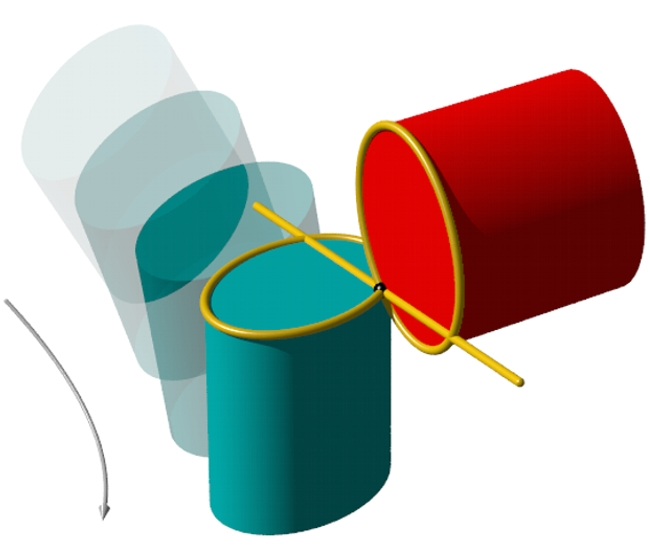}\\ 
    {\small (d)}
  \end{minipage}
  \end{minipage}
\caption{Contact configurations of two capped cylinders determined by CCDs of four different types of element pair.
(a) $(F,F)$-type; (b \& c) $(F,E)$-type; and (d) $(E,E)$-type.  CCDs of (b)--(d) are solved using the dimtension
reduction technique.}
\label{capped cylinders}
\end{figure}

We first consider CCD of a quadric surface ${\mathcal S}(t): X^T S(t) X = 0$
and a conic curve ${\mathcal C}(t)$ defined in the plane 
$\Pi_{\mathcal C}(t)$ in $\mathbb R^3$.
Let ${\mathcal S}_\Pi(t)$ be the intersection of ${\mathcal S}(t)$ with $\Pi_{\mathcal C}(t)$.
We thereby reduce CCD of ${\mathcal S}(t)$ and ${\mathcal C}(t)$ to CCD
of ${\mathcal S}_\Pi(t)$ and ${\mathcal C}(t)$, which are two conics, in the plane $\Pi_{\mathcal C}(t)$.
CCD of two conics in $\mathbb R^2$
is handled in Case~\caseconicsconicstwo (Section~\ref{sect:CCD_conics_2D}).

\subsection{Case~\caseplanesconics --- planes vs. conics}
\label{sect:CCD_planes_conics}

Consider CCD of a plane ${\mathcal P}(t)$
and a conic curve ${\mathcal C}(t)$ defined in a plane $\Pi_{\mathcal C}(t)$ in $\mathbb R^3$.
The case of ${\mathcal P}(t)$ and $\Pi_{\mathcal C}(t)$ being identical for all $t$
can be disregarded, as any possible first contact of the CQMs due to 
${\mathcal P}(t)$ and ${\mathcal C}(t)$
can then be detected by CCD of ${\mathcal C}(t)$ and other boundary elements on 
${\mathcal P}(t)$. 
If ${\mathcal P}(t)$ and $\Pi_{\mathcal C}(t)$ are parallel for all $t$,
then there is no contact between ${\mathcal P}(t)$ and ${\mathcal C}(t)$.
Otherwise, with dimension reduction, CCD of a plane ${\mathcal P}(t)$
and a conic curve ${\mathcal C}(t)$ is reduced to 
CCD of ${\mathcal C}(t)$ and a moving line which is the 
intersection between ${\mathcal P}(t)$ and $\Pi_{\mathcal C}(t)$ 
in $\mathbb{R}^2$, and the latter is handled by Case~\caseconicslinestwo (Section~\ref{sect:CCD_conics_lines_2D})
for CCD between conics and lines in $\mathbb{R}^2$.
For the candidate time instants thus found, we will verify and discard those $t_i$ at which ${\mathcal P}(t_i)$ and $\Pi_{\mathcal C}(t_i)$ are parallel.

\subsection{Case~\casequadricslines --- quadrics vs. lines}
\label{sect:CCD_quadrics_lines}

Suppose ${\mathcal S}(t): X^T S(t) X = 0$ is a quadric surface
and ${\mathcal L}(u;t)$ is a line in $\mathbb R^3$.
We simply substitute ${\mathcal L}(u;t)$
into ${\mathcal S}(t)$ and obtain $g(u;t) = L(u;t)^T S(t) L(u;t)$
which is quadratic in $u$.  
The line ${\mathcal L}(u;t_i)$ touches ${\mathcal S}(t_i)$ at a particular time $t_i$
if $g(u;t_i)$ has a double root $u_0$.
Hence, the candidate contact time instants
of ${\mathcal S}(t)$ and ${\mathcal L}(u;t)$
are given by the roots of the discriminant $\Delta_g(t)$ of $g(u;t)$.
For each candidate contact time instant $t_i$, the contact point is  
${\mathcal L}(u_0;t_i)$ where $u_0$ is the double root of $g(u;t_i)$.
If $g(u;t_i)$ is identically zero, we have ${\mathcal L}(u;t_i)$ lying entirely on 
${\mathcal S}(t_i)$.

%

\subsection{Case~\casequadricsvertices --- quadrics/planes vs. vertices}
\label{sect:CCD_quadrics_vertices}

Let ${\mathcal S}(t): X^T S(t) X = 0$ be a quadric surface and
${\bf p}(t)$ be a vertex in $\mathbb R^3$.
By direct substitution, we obtain the equation
${\bf p}^T(t) S(t) {\bf p}(t) = 0$
whose roots give the candidate contact time instants.
Similarly, for CCD of a plane ${\mathcal H}(t): H(t)^T X = 0$ and
a vertex ${\bf p}(t)$,
the candidate contact time instants are the roots of the equation
$H(t)^T {\bf p}(t) = 0$.

\subsection{Case~\caseconicsconicsthree --- conics vs. conics in $\mathbb R^3$}
\label{sect:CCD_space_conics}

We transform the problem of CCD of two conics in $\mathbb R^3$ into 
CCD of one dimension in the line where the containing planes of the conics intersect.
Let a moving conic ${\mathcal A}(t)$ be defined as the intersection between a quadric $\tilde{\mathcal A}(t): X^T A(t)X=0$ and a plane $\Pi_A(t)$ in $\mathbb{R}^3$.
Similarly, ${\mathcal B}(t)$ is a moving conic which is the intersection between a quadric $\tilde{\mathcal B}(t): X^T B(t)X=0$ and a plane $\Pi_B(t)$ in $\mathbb{R}^3$.
We first assume that $\Pi_A(t) \not\equiv \Pi_B(t)$ and
let ${\mathcal L}(u;t)$ be a parameterization of the line of intersection
between $\Pi_A(t)$ and $\Pi_B(t)$.

Substituting ${\mathcal L}(u;t)$ into the conic equations,
we have:
\begin{eqnarray}
h(u;t): \;\; L^T(u;t)A(t)L(u;t) = 0,\label{eqn:subsL}\\
g(u;t): \;\; L^T(u;t)B(t)L(u;t) = 0.\nonumber
\end{eqnarray}

The solution of $h(u;t) = 0$ gives the intersection between ${\mathcal L}(u;t)$ and
${\mathcal A}(t)$.
Likewise, the solution of $g(u;t) = 0$ gives the intersection between ${\mathcal L}(u;t)$
and ${\mathcal B}(t)$.
Since $h(u;t)$ and $g(u;t)$ are quadratic in $u$, we may write
$h(u;t) = U^T {\hat A}(t) U$ and
$g(u;t) = U^T {\hat B}(t) U$
where $U = (u,1)^T$ and ${\hat A}(t)$ and ${\hat B}(t)$ are $2 \times 2$ coefficient matrices.
It means that $h(u;t)$ and $g(u;t)$ can be considered as two moving 
``1D projective conics''
(i.e., intervals or line segments),
denoted by $\hat{\mathcal A}(t)$ and $\hat{\mathcal B}(t)$,
which can be either real or imaginary.
Now, ${\mathcal A}(t)$ and ${\mathcal B}(t)$ have real tangency in 
$\mathbb{PR}^3$
if and only if there is real tangency between $\hat{\mathcal A}(t)$ and $\hat{\mathcal B}(t)$ in $\mathbb{PR}^3$, that is, an end-point of $\hat{\mathcal A}(t)$ overlap with an end-point of $\hat{\mathcal B}(t)$.
Hence, we have essentially reduced a 3D problem (namely, CCD of two moving conics in the space) to a 1D problem (namely, CCD of two moving intervals in a line).

Let $f(\lambda) = \det (\lambda {\hat A} - {\hat B})$ be the characteristic polynomial of
two static 1D conics $\hat{\mathcal A}: X^T{\hat A}X=0$ and
$\hat{\mathcal B}: X^T{\hat B}X =0$ in $\mathbb{PR}$.
The intersection of $\hat{\mathcal A}$ and $\hat{\mathcal B}$ can be
characterized by the roots of $f(\lambda)$,
as summarized in~Table~\ref{tab:1D_ellipse}
(the derivation follows similarly as in~\cite{Choi2008} for the characterization 
of the intersection of two 1D ellipses in $\mathbb{R}$).
Hence, we have the following theorem stating the conditions for 
two conics to have contact in $\mathbb{PR}^3$:

\begin{table}[ht!]
\caption{Configuration of two 1D conics $\hat{\mathcal A}$ \&
$\hat{\mathcal B}$ in $\mathbb{PR}$ and the roots of their characteristic equation $f(\lambda)=0$.
Each 1D conic is represented in pairs of brackets of the same style.
Degenerate conic of one point is represented by either a dot or a cross.
}\label{tab:1D_ellipse}
\begin{center}
\begin{tabular}{ccl}\hline
\rule[-1.5ex]{0ex}{4ex}Roots of $f(\lambda)=0$ &  Configuration & \\
\hline
\parbox{.3\linewidth}{(1) Distinct positive} &
\parbox{.2\linewidth}{\includegraphics[width=\linewidth]{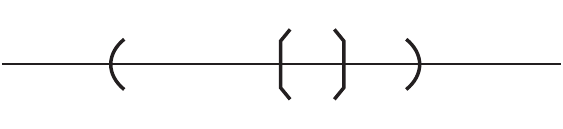}} &
or both $\hat{\mathcal A}$ \& $\hat{\mathcal B}$ are imaginary and 
$\hat{\mathcal A} \not= \hat{\mathcal B}$\\
\parbox{.3\linewidth}{(2) Distinct negative} &
\parbox{.2\linewidth}{\includegraphics[width=\linewidth]{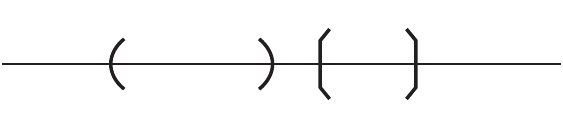}} & \\
\parbox{.3\linewidth}{(3) One zero, one positive} &
\parbox{.2\linewidth}{\includegraphics[width=\linewidth]{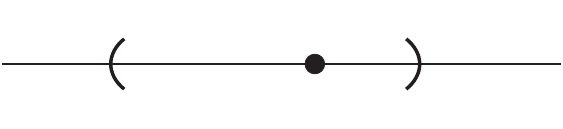}} & \\
\parbox{.3\linewidth}{(4) One zero, one negative} &
\parbox{.2\linewidth}{\includegraphics[width=\linewidth]{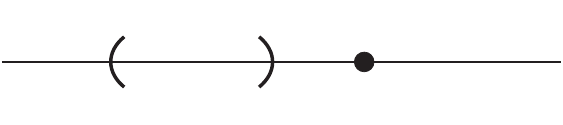}} & \\
\parbox{.3\linewidth}{(5) One negative, one positive} &
\parbox{.2\linewidth}{\includegraphics[width=\linewidth]{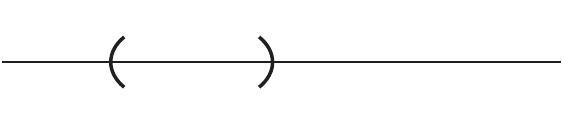}} &
with another conic being imaginary\\
\parbox{.3\linewidth}{(6) Positive double} &
\parbox{.2\linewidth}{\includegraphics[width=\linewidth]{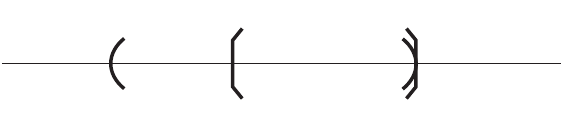}} &
or both $\hat{\mathcal A}$ \& $\hat{\mathcal B}$ are imaginary and 
$\hat{\mathcal A} = \hat{\mathcal B}$\\
\parbox{.3\linewidth}{(7) Negative double} &
\parbox{.2\linewidth}{\includegraphics[width=\linewidth]{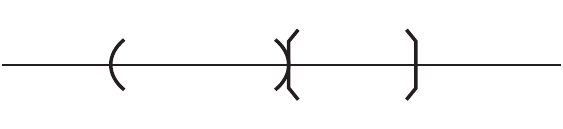}} & \\
\parbox{.3\linewidth}{(8) Double zero} &
\parbox{.2\linewidth}{\includegraphics[width=\linewidth]{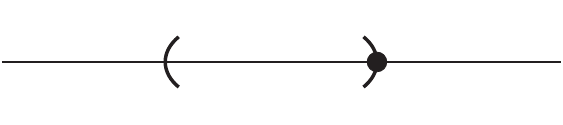}} &\\
\parbox{.3\linewidth}{(9) Complex conjugate} &
\parbox{.2\linewidth}{\includegraphics[width=\linewidth]{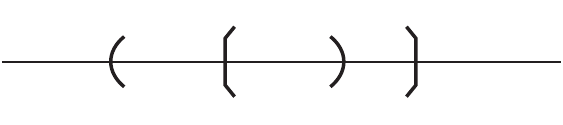}} & \\
\parbox{.3\linewidth}{(10) $f(\lambda)$ is linear with roots $=0$}  &
\parbox{.2\linewidth}{\includegraphics[width=\linewidth]{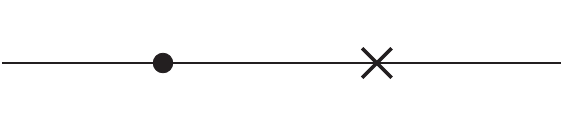}} & \\
\parbox{.3\linewidth}{(11) $f(\lambda) \equiv 0$} &
\parbox{.2\linewidth}{\includegraphics[width=\linewidth]{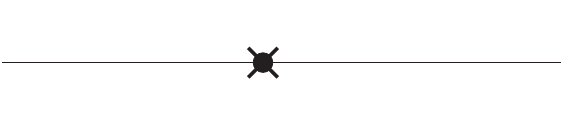}} & \\\\
\hline
\end{tabular}
\end{center}
\end{table}

\begin{theorem}\label{thm:conics3D}
Given two conics $\mathcal A$ (on plane $\Pi_A$)
and $\mathcal B$ (on plane $\Pi_B$) in ${\mathbb R}^3$, suppose that
$\Pi_A$ and $\Pi_B$ intersect at some line $\mathcal L \in {\mathbb R}^3$.
Let $\hat{\mathcal A}: X^T{\hat A}X=0$ and $\hat{\mathcal B}: X^T{\hat B}X = 0$ be the ``1D conics''
characterizing the intersections of $\mathcal L$ with $\mathcal A$ and
$\mathcal B$, respectively.
Furthermore, let $f(\lambda)=\det(\lambda {\hat A}-{\hat B})$ be the characteristic
polynomial of $\hat{\mathcal A}$ and $\hat{\mathcal B}$.  Then, the conics
${\mathcal A}$ and ${\mathcal B}$ are in contact in $\mathbb{PR}^3$
if and only if
\begin{enumerate}
\item $f(\lambda)$ has a double root (Figure~\ref{fig:ellipse3Dthm}(a-c)); or
\item $f(\lambda) \equiv 0$ (Figure~\ref{fig:ellipse3Dthm}(d)).
\end{enumerate}
\end{theorem}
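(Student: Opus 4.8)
The plan is to reduce the problem to the one-dimensional characterization already summarized in Table~\ref{tab:1D_ellipse}, following the observation made just before the theorem statement. Recall that $\mathcal{A}$ lies in $\Pi_A$ and $\mathcal{B}$ lies in $\Pi_B$, and that $\mathcal{L}=\Pi_A\cap\Pi_B$ is a line. Any point of $\mathcal{A}\cap\mathcal{B}$ must lie on both $\Pi_A$ and $\Pi_B$, hence on $\mathcal{L}$; so the set $\mathcal{A}\cap\mathcal{B}$ coincides with $(\mathcal{A}\cap\mathcal{L})\cap(\mathcal{B}\cap\mathcal{L})$. After parameterizing $\mathcal{L}$ by $u$ and substituting, $\mathcal{A}\cap\mathcal{L}$ is the zero set of the binary quadratic form associated with $\hat{A}$, i.e. the ``1D conic'' $\hat{\mathcal{A}}$, and similarly for $\hat{\mathcal{B}}$. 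Thus a point of $\mathcal{A}\cap\mathcal{B}$ corresponds exactly to a common point of the two 1D conics $\hat{\mathcal{A}}$ and $\hat{\mathcal{B}}$ on $\mathcal{L}\cong\mathbb{PR}$.

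Next I would translate the notion of \emph{contact} into the 1D picture. Two conics in $\mathbb{R}^3$ are in contact when their boundaries meet but their (relative) interiors are disjoint; for a curve versus a curve this means a tangential intersection, i.e.\ a common point that is not a transversal crossing. On the line $\mathcal{L}$, each 1D conic $\hat{\mathcal{A}}$ is (the boundary of) an interval — a pair of endpoints, possibly real, coincident, or complex conjugate. A tangential contact of $\mathcal{A}$ and $\mathcal{B}$ along $\mathcal{L}$ corresponds precisely to the situation where an endpoint of $\hat{\mathcal{A}}$ coincides with an endpoint of $\hat{\mathcal{B}}$ — as stated in the text — rather than one interval's endpoint lying strictly inside the other (which would be a transversal crossing of the two space conics). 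So the theorem reduces to: $\hat{\mathcal{A}}$ and $\hat{\mathcal{B}}$ share an endpoint iff $f(\lambda)=\det(\lambda\hat{A}-\hat{B})$ has a double root, or $f\equiv 0$.

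This last equivalence is read off Table~\ref{tab:1D_ellipse}. The pencil $\lambda\hat{A}-\hat{B}$ of binary quadratic forms degenerates (drops rank) exactly at the roots of $f(\lambda)$; a degenerate member is a perfect square (double point) or the zero form. A \emph{simple} root of $f$ gives a member of the pencil that is a double point located at a transversal intersection point of the two pairs of endpoints — this is cases (1)--(5),(9),(10) of the table, where $\hat{\mathcal{A}}$ and $\hat{\mathcal{B}}$ do \emph{not} share an endpoint (the two point-pairs interleave or are disjoint or complex). A \emph{double} root of $f$ — cases (6)--(8) — forces the degenerate member to be a double point sitting at a \emph{shared} endpoint of the two intervals, i.e.\ a common root counted with the right multiplicity; and $f\equiv 0$ — case (11) — is the degenerate-pencil case where $\hat{\mathcal{A}}$ and $\hat{\mathcal{B}}$ have a common endpoint forced for every $\lambda$. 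Matching these rows against the table's ``Configuration'' column establishes the biconditional. I would also handle separately the edge cases where $\hat{A}$ or $\hat{B}$ is already singular (an endpoint of $\hat{\mathcal{A}}$ or $\hat{\mathcal{B}}$ at infinity on $\mathcal{L}$, or $\mathcal{L}$ tangent to one of the underlying quadrics), and the excluded case $\Pi_A\equiv\Pi_B$, which the hypothesis rules out.

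The main obstacle is the second step: carefully justifying that ``endpoints of $\hat{\mathcal{A}}$ and $\hat{\mathcal{B}}$ coincide'' is \emph{equivalent} to — not merely implied by — ``$\mathcal{A}$ and $\mathcal{B}$ are in contact in $\mathbb{PR}^3$'', i.e.\ that a non-tangential (transversal) crossing of the two space conics really does correspond to interleaved but distinct endpoints on $\mathcal{L}$, and conversely that a shared endpoint cannot accidentally be a transversal crossing. This requires relating the local crossing behaviour of the two plane curves in $\mathbb{R}^3$ to the incidence pattern of two point-pairs on their common line, and checking it is consistent across all the degenerate sub-configurations (coincident endpoints, real vs.\ imaginary conics, conics reducing to points or lines). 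Everything else is bookkeeping against Table~\ref{tab:1D_ellipse}.
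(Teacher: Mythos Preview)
Your plan matches the paper's exactly: the paper gives no formal proof beyond the reduction to the two 1D conics on $\mathcal{L}$ and an appeal to Table~\ref{tab:1D_ellipse} (whose derivation it defers to~\cite{Choi2008}), and you reproduce precisely that structure. One point to tighten: for conics on distinct planes $\Pi_A\neq\Pi_B$, the $(E,E)$ notion of ``contact'' in this paper is simply \emph{having a real intersection}, and since $\mathcal{A}\cap\mathcal{B}\subset\mathcal{L}$ this occurs exactly when the two point-pairs $\hat{\mathcal A},\hat{\mathcal B}$ share a point---there is no separate ``transversal crossing'' case to exclude, and your aside about an endpoint lying strictly inside the other interval does not correspond to any intersection of the space conics at all (the word ``tangential'' is a red herring here, even though the paper itself uses it loosely).
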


\begin{figure*}[ht!]
\centering
  \begin{minipage}{\linewidth}
  \begin{minipage}{.24\linewidth}\centering
    \includegraphics[width=\linewidth]{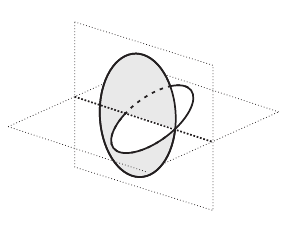}\\ \vspace{-1ex}
    {\scriptsize (a)}
  \end{minipage}\hfill
  \begin{minipage}{.24\linewidth}\centering
    \includegraphics[width=\linewidth]{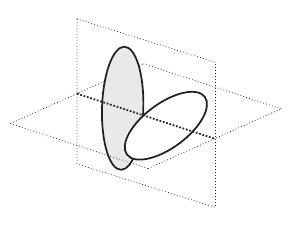}\\ \vspace{-1ex}
    {\scriptsize (b)}
  \end{minipage}\hfill
  \begin{minipage}{.24\linewidth}\centering
    \includegraphics[width=\linewidth]{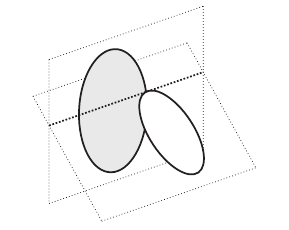}\\ \vspace{-1ex}
    {\scriptsize (c)}
  \end{minipage}\hfill
  \begin{minipage}{.24\linewidth}\centering
    \includegraphics[width=\linewidth]{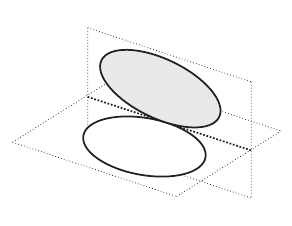}\\ \vspace{-1ex}
    {\scriptsize (d)}
  \end{minipage}
  \end{minipage}
\caption{The four configurations of two touching conics in 3D.
Sub-figures (a), (b), (c) \& (d) correspond to the cases (6), (7), (8) \& (11)
of Table~\ref{tab:1D_ellipse}, respectively.}
\label{fig:ellipse3Dthm}
\end{figure*}

Algorithm~\ref{alg:CCD_conics} gives the procedure for solving CCD of two moving conics in $\mathbb R^3$.  First of all,
if two moving conics are found to be contained in the same plane (i.e., $\Pi_A(t) \equiv \Pi_B(t)$) for all $t$, we may apply a continuous transformation $M(t)$ to both conics that maps $\Pi_A(t)$ to $\mathbb R^2$ and the problem is reduced to a two dimensional CCD of two moving conics in $\mathbb R^2$.
Otherwise, we reduce the problem to a one dimensional CCD of two 1D conics using the above formulation and capture the time instants at which the conditions in Theorem~\ref{thm:conics3D} are satisfied.
This is done by computing the zeroes of the discriminant $\Delta_f(t)$ of $f(\lambda;t)$ which give the instants $t_i$
when $f(\lambda;t_i)$ has a double root (condition 1) or $f(\lambda;t_i) \equiv 0$
(condition 2).
\label{res:plane_parallel}
The \textbf{for} loop in the algorithm handles the special case in which a zero $t_i$ of $\Delta_f(t)$ corresponds to when $f(\lambda;t_i) \equiv 0$.
This may happen when the containing planes of both conics $\mathcal{A}(t_i)$ and 
$\mathcal{B}(t_i)$ are parallel so that ${\cal L}(u;t_i)$ is a line at infinity, and the conics are not in contact.
The function $f(\lambda;t_i)$ may also be identically zero when
$\mathcal{A}(t_i)$ and 
$\mathcal{B}(t_i)$ lie on the same plane so that ${\cal L}(u;t_i)$ becomes undefined.
The two conics may or may not have a contact in this case and therefore we need to further carry out a 2D static collision detection of ${\cal A}'(t_i)$ and ${\cal B}'(t_i)$, the image of 
${\cal A}(t_i)$ and ${\cal B}(t_i)$ under a rigid transformation to $\mathbb R^2$.
The conics ${\cal A}(t_i)$ and ${\cal B}(t_i)$ are in contact if and only if 
the characteristic equation of ${\cal A}'(t_i)$ and ${\cal B}'(t_i)$ has a multiple root.
In any case, a candidate contact time instant that corresponds to a contact point at infinity is discarded.

\begin{center}
\begin{minipage}{0.8\textwidth}
\begin{algorithm}[H]
\caption{Computing the candidate time instants for two moving conics in
         $\mathbb R^3$}\label{alg:CCD_conics}
\begin{algorithmic}
\REQUIRE Two moving conics ${\mathcal A}(t)$ and ${\mathcal B}(t)$
         defined in the planes $\Pi_A(t)$ and $\Pi_B(t)$ in ${\mathbb R}^3$,
         $t \in [t_0,t_1]$, respectively.
\medskip
\IF {$\Pi_A(t) \equiv \Pi_B(t)$ for $t \in [t_0,t_1]$}
    \STATE Reduce CCD of ${\mathcal A}(t)$ and ${\mathcal B}(t)$ to that of
    two moving conics in a plane which is handled by Case~\caseconicsconicstwo
    (Section~\ref{sect:CCD_conics_2D})
\ELSE
    \STATE Compute the intersection line $\mathcal L(u;t)$
			     between $\Pi_A(t)$ and $\Pi_B(t)$
		\STATE Compute $h(u;t)$ and $g(u;t)$ as in Eq.~(\ref{eqn:subsL})
		       and obtain ${\hat A}(t)$ \& ${\hat B}(t)$  by rewriting
		       $h(u;t) = U^T {\hat A}(t) U $ and $g(u;t) = U^T {\hat B}(t) U$
		       where $U = (u, 1)^T$
    \STATE Compute $f(\lambda;t)=\det(\lambda {\hat A}(t)-{\hat B}(t))$	
           and the discriminant $\DeltaT$ of $f(\lambda;t)$
    \FORALL {$t_i \in \{ t \mid \Delta_f(t) = 0 \}$}
        \IF {$\Pi_A(t_i) = \Pi_B(t_i)$}
            \STATE Transform ${\cal A}(t_i)$ and ${\cal B}(t_i)$ to 
                   ${\cal A}'(t_i)$ and ${\cal B}'(t_i)$ in $\mathbb R^2$
            \IF {the characteristic equation of ${\cal A}'(t_i)$ and ${\cal B}'(t_i)$ has a multiple root}
                \STATE $\mathcal T \leftarrow \mathcal T \cup \{ t_i \}$
            \ENDIF
        \ELSIF {$\Pi_A(t_i)$ and $\Pi_B(t_i)$ are not parallel}
            \STATE $\mathcal T \leftarrow \mathcal T \cup \{ t_i \}$
        \ENDIF
    \ENDFOR
    \RETURN $\mathcal T$ as the candidate contact time instants
\ENDIF
\end{algorithmic}
\end{algorithm}
\end{minipage}
\end{center}

\subsection{Case~\caseconicsconicstwo --- conics vs. conics in $\mathbb R^2$}
\label{sect:CCD_conics_2D}

CCD of two conics in a plane is handled using the same algebraic approach as
in~\cite{ChoiWLK2006} for CCD of two ellipses in $\mathbb R^2$.
Given two moving conics ${\mathcal A}(t): {\bar X}^TA(t){\bar X}=0$ and
${\mathcal B}(t): {\bar X}^TB(t){\bar X}=0$ in $\mathbb R^2$, 
where ${\bar X} = (x, y, w)^T$ and $t \in [t_0, t_1]$,
the characteristic equation $f(\lambda;t)=\det\big(\lambda A(t)-B(t)\big)=0$
is cubic in $\lambda$.  The equation $f(\lambda;t_0)=0$
has a multiple root $\lambda_0$ if and only if
${\mathcal A}(t_0)$ and ${\mathcal B}(t_0)$ have tangential contact at time $t_0$.
Hence, we compute the discriminant $\DeltaT$ of $f(\lambda;t)$,
and the zeroes of $\DeltaT$
would be the candidate contact time instants of ${\mathcal A}(t)$ and ${\mathcal B}(t)$.
For each contact time instant $t_i$, the contact point is given by the
solution of $\big(\lambda_0 A(t)-B(t)\big){\bar X}=0$ where $\lambda_0$ 
is a multiple root of $f(\lambda;t)=0$.

\subsection{Case~\caseconicslinesthree --- conics vs. lines in $\mathbb{R}^3$}
\label{sect:CCD_conics_lines_3D}

Consider CCD between a conic $\mathcal{C}(t)$ and a line ${\mathcal L}(u;t)$ 
in $\mathbb{R}^3$.  We assume that $\mathcal{C}(t)$
is given as the intersection between a quadric $\hat{\mathcal{C}}(t): X^T {\hat C}(t) X = 0$ and a plane $\Pi_{\mathcal{C}(t)}$ in $\mathbb{R}^3$ 
(so that the axis of $\hat{\mathcal{C}}(t)$ is orthogonal to $\Pi_\mathcal{C}(t)$).
We may disregard the case of ${\mathcal L}(u;t)$ and $\Pi_{\mathcal{C}(t)}$ being always identical,
as any possible first contact of the CQMs due to $\mathcal{C}(t)$ and ${\mathcal L}(u;t)$ can then be detected by CCD of ${\mathcal L}(u;t)$ and the neighbouring boundary elements of $\mathcal{C}(t)$.
If ${\mathcal L}(u;t)$ and $\Pi_{\mathcal{C}(t)}$ are parallel for all $t$,
then ${\mathcal L}(u;t)$ and $\mathcal{C}(t)$ have no contact.
Otherwise, we obtain ${\bf p}(t)$ which is the intersection of ${\mathcal L}(u;t)$ and $\Pi_{\mathcal{C}(t)}$.  
The conic $\mathcal{C}(t_i)$ is in contact with ${\mathcal L}(u;t_i)$ in 
$\mathbb{R}^3$ at time $t_i$
if and only if ${\bf p}(t_i)$ lies on $\mathcal{C}(t_i)$, that is,
${\bf p}(t_i)^T {\hat C}(t_i) {\bf p}(t_i) = 0$, and ${\bf p}(t_i)$ is not at infinity.
Hence, the roots of ${\bf p}(t)^T {\hat C}(t) {\bf p}(t) = 0$ are the candidate contact time instants; those of which corresponding to ${\bf p}(t_i)$ at infinity are discarded.

\subsection{Case~\caseconicslinestwo --- conics vs. lines in $\mathbb{R}^2$}
\label{sect:CCD_conics_lines_2D}

Let $\mathcal{C}(t): \bar{X}^TC(t)\bar{X} = 0$ be a conic 
and $\mathcal{L}(u;t)$ be a line in $\mathbb{R}^2$,
where $\bar{X} = (x, y, w)^T \in \mathbb{PR}^2$.
By substituting $\mathcal{L}(u;t)$ into $\mathcal{C}(t)$,
we obtain $g(u;t) = L(u;t)^T C(t) L(u;t)$ which is quadratic in $u$.
Each root $t_i$ of the discriminant $\Delta_g(t)$ of $g(u;t)$ is a 
candidate contact time instant of $\mathcal{C}(t)$ and $\mathcal{L}(u;t)$,
with a corresponding contact point $\mathcal{L}(u_0, t_i)$, where
$u_0$ is a double root of $g(u;t_i)$.

\subsection{Case~\caselineslines --- lines vs. lines}
\label{sect:CCD_lines}

For CCD of two lines in $\mathbb R^3$, we seek the time instants
at which the lines intersect in $\mathbb R^3$.
Two lines $\mathcal{L}_1(u; t)={\bf p}_1(t) + u\, {\bf q}_1(t)$
and $\mathcal{L}_2(v; t)={\bf p}_2(t) + v\, {\bf q}_2(t)$ intersect in $\mathbb{PR}^3$
if and only if ${\bf q}_1(t)$, ${\bf q}_2(t)$ and
${\bf p}_2(t) - {\bf p}_1(t)$ are coplanar.  The contact time instants
are then given by
the roots of $g(t) = \det [{\bf q}_1(t), {\bf q}_2(t),
{\bf p}_2(t)-{\bf p}_1(t) ] = 0$.
The case of $g(t) \equiv 0$ is neglected since it corresponds to 
two moving lines which are always coplanar; any contact between two 
moving line segments of this kind for two CQMs can be detected by CCD of 
an end vertex of one line segment and a CQM face on which the other line segment lies.
For each candidate time instant $t_0$, the corresponding candidate contact point 
is given by ${\bf p} = {\bf p}_1(t_0) + u'\, {\bf q}_1(t_0)$, where 
$u' = \big(( {\bf p}_2(t_0) - {\bf p}_1(t_0) ) \times {\bf q}_2(t_0) \big) \cdot
      \big({\bf q}_1(t_0) \times {\bf q}_2(t_0)\big) / \big| {\bf q}_1(t_0) \times {\bf q}_2(t_0) \big|^2$ (see~\cite{Hill1994}). The straight line $\mathcal{L}_1(u; t_0)$
and $\mathcal{L}_2(u; t_0)$ are parallel and have no contact in $\mathbb{R}^3$ if 
      $\big| {\bf q}_1(t_0) \times {\bf q}_2(t_0) \big|^2 = 0$.

\section{Contact Validation}\label{sect:contact_validation}

For each contact point computed from the CCD subproblems, we need to
check if it is a valid contact point of two CQMs.
A candidate contact point $\bf p$ is a valid contact point if and only if
\begin{enumerate}
\item $\bf p$ lies on both CQMs;
\item $\bf p$ constitutes an external contact of the CQMs, which means that the interior of the CQMs does not overlap.  
\end{enumerate}

To ascertain that the first criteria is satisfied,
we assume that a CQM is obtained using CSG (constructive
solid geometry) and is represented by a CSG construction tree.
Other boundary surface representation
may entail different procedures for the validation,
but the idea is essentially the same.

Given a candidate contact point $\bf p$ between
two extended boundary elements of two CQMs ${\cal Q}_A(t_0)$
and ${\cal Q}_B(t_0)$ at time $t_0$,
let $\overset{\circ}u =$
``${\bf p}$ is in the interior of $u$'' and
$\partial u =$ ``${\bf p}$ is on the boundary of
$u$'' be two Boolean predicates. For each internal node
associated with a CSG object $w$, we will evaluate
$\overset{\circ}w$ and $\partial w$ recursively using the following
rules
(see Figure~\ref{verification}
for a 2D analogy):
\begin{align*}
\linespread{0.1} \mbox{Case } w & = u \cup v \mbox{ :} &\quad
 \overset{\circ} w \leftrightarrow{} &
 \overset{\circ}u  \vee \overset{\circ}v,& \quad
 \partial w & \leftrightarrow
  (\partial u \wedge \partial v) \vee
  (\partial u \wedge \neg \overset{\circ} v) \vee
  (\neg \overset{\circ} u \wedge \partial v)\\
\mbox{Case } w & = u \cap v \mbox{ :} &\quad
 \overset{\circ} w \leftrightarrow{} &
  \overset{\circ}u  \wedge \overset{\circ}v, & \quad
 \partial w & \leftrightarrow
  (\partial u \wedge \partial v) \vee
  (\overset{\circ}u \wedge \partial v) \vee
  (\partial u \wedge \overset{\circ}v)\\
\mbox{Case } w & = u \setminus v \mbox{ :} &\quad
 \overset{\circ} w \leftrightarrow{} &
  \overset{\circ}u \wedge \neg (\overset{\circ}v \vee \partial v),&  \quad
 \partial w & \leftrightarrow
  (\partial u \wedge \partial v) \vee
  (\partial u \wedge \neg \overset{\circ}v) \vee
  (\overset{\circ}u \wedge \partial v)
\end{align*}
It suffices to consider
the three basic Boolean operations, since all other CSG operations
can be described as their compositions.
Whether or not a point is inside or on the boundary of a quadric CSG
primitive can be checked using the quadric equation, which can be done
exactly (see Remark~\ref{rem:exact_solution}).
The answer to whether $\bf p$ is on the boundary surface of ${\cal
Q}_A(t_0)$ is then given by the truth value of the predicate
$\partial {\cal Q}_A(t_0)$ evaluated at the root node of the CSG
tree of ${\cal Q}_A(t_0)$.
Hence, a candidate contact point $\bf p$ lies on both CQMs if and only if
both $\partial {\cal Q}_A(t_0)$ and $\partial {\cal Q}_B(t_0)$
are true.

Regarding the second criteria, since the two given CQMs are separate initially,
the first occurrence of a valid contact point must guarantee an external contact of the CQMs.

\begin{figure}
\centering
\includegraphics[width=.8\linewidth]{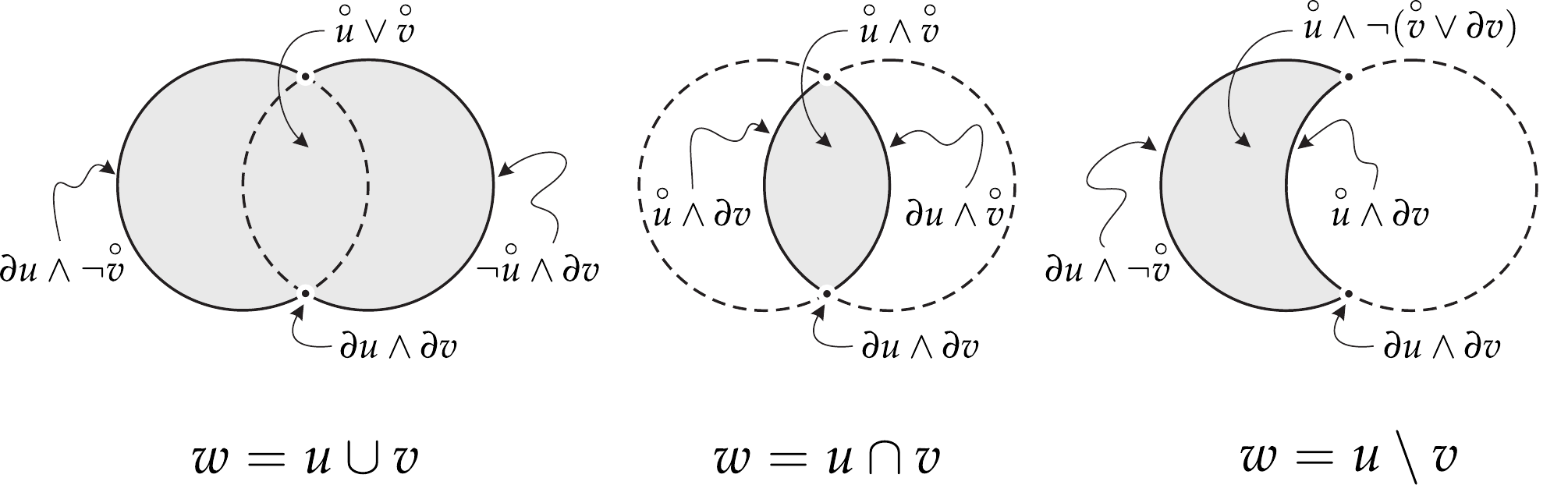}
\caption{Contact validation for CSG objects.  Given two objects $u$ and $v$,
and the result $w$ of applying a Boolean operation to $u$ and $v$,
the boundary and interior
of $w$, denoted by $\partial w$ and $\protect\overset{\circ}w$, respectively,
can be determined from $\partial u$, $\protect\overset{\circ}u$, $\partial v$ and
$\protect\overset{\circ}v$ accordingly.} \label{verification}
\end{figure}

\section{Two working examples}
\label{sect:cqm_example}

\begin{example}\label{eg:cylinder}
In this example, we solve CCD of two moving
capped elliptic cylinders $\mathcal A(t)$ and $\mathcal B(t)$,
both are of the same size (Figure~\ref{fig:cylinder_example}a).
The boundary elements of the cylinders are:
\begin{itemize}
\item Face $F_{\mathcal A,1}$, $F_{\mathcal B,1}$: a cylinder
      $\frac{x^2}{5^2}+\frac{y^2}{10^2}=1$, $z \in [-5,5]$.
\item Face $F_{\mathcal A,2}$, $F_{\mathcal B,2}$: a plane $z = -5$; and
      face $F_{\mathcal A,3}$, $F_{\mathcal B,3}$: a plane $z = 5$.
\item Edge $E_{\mathcal A,1}$, $E_{\mathcal B,1}$: an ellipse
      $\frac{x^2}{5^2}+\frac{y^2}{10^2}=1$, $z = -5$; and
      edge $E_{\mathcal A,2}$, $E_{\mathcal B,2}$: an ellipse
      $\frac{x^2}{5^2}+\frac{y^2}{10^2}=1$, $z = 5$.
\end{itemize}

\begin{figure}[!ht]
\begin{minipage}{.5\linewidth}
	\begin{center}
	\includegraphics[height=1.5in]{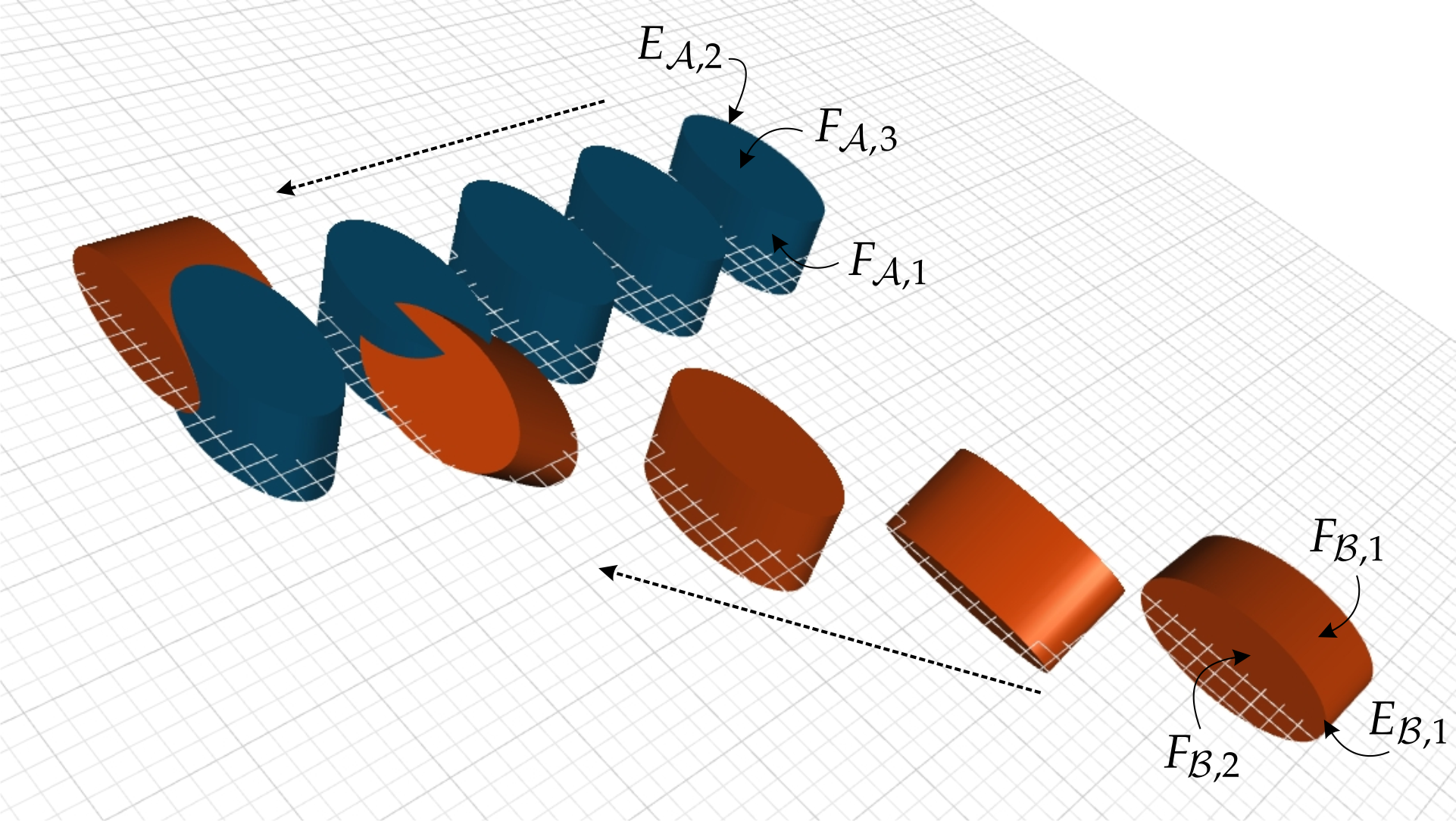}\\
	{\small (a)}
	\end{center}
\end{minipage}\hfill
\begin{minipage}{.5\linewidth}
	\begin{center}
	\includegraphics[height=1.5in]{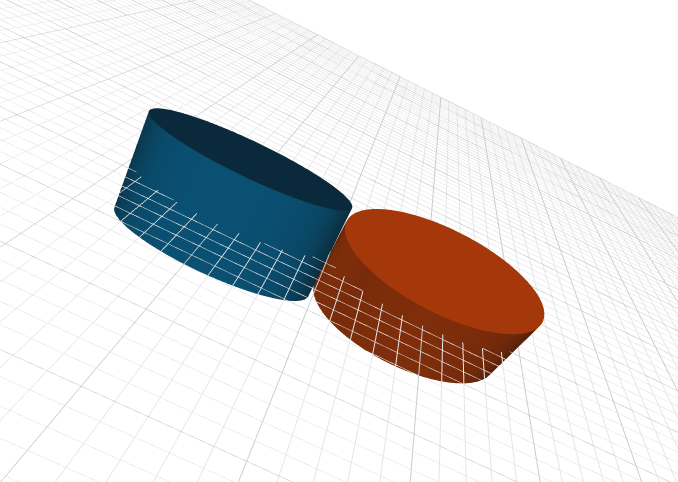}\\
	{\small (b)}
	\end{center}
\end{minipage}
\caption{(a) Two moving capped cylinders. (b) The cylinders are found
to have the first contact at $t=0.625$.}
\label{fig:cylinder_example}
\end{figure}

\label{res:motion_matrices}
Cylinder $\mathcal{A}(t)$ assumes a linear translation, while
cylinder $\mathcal{B}(t)$ assumes a degree-2 rotation as well as a linear translation.
The motion matrices of $\mathcal{A}(t)$ and $\mathcal{B}(t)$ are
{\footnotesize
\begin{equation*}
M_A(t) = \begin{pmatrix}
      1 & 0 & 0 & -60t + 30 \\
      0 & 1 & 0 & 20        \\
      0 & 0 & 1 & 0         \\
      0 & 0 & 0 & 1
         \end{pmatrix} \;\mbox{and}\;
M_B(t) = \begin{pmatrix}
	    -2t^2+2t &   0        & -2t+1    & -120t^3+180t^2-120t+30 \\
	     0       & 2t^2-2t+1  &  0       & 160t^3-260t^2+180t-50\\
	   2t-1      & 0          & -2t^2+2t & 0\\
	   0         & 0          & 0        & 2t^2-2t+1
	   	   \end{pmatrix},
\end{equation*}
}
respectively, $t \in [0,1]$.  We refer the readers to~\cite{JuettlerW2002} for the details of the construction of the rational motion $M_B(t)$.
The moving face $F_{\mathcal A,1}$ can then be expressed as $X^T A(t) X = 0$,
where $X = (x, y, z, 1)^T$ and
\label{res:quadric_equation}
\[
A(t) = M_A^{-T}(t) \begin{pmatrix} \frac{1}{5^2} & & & \\
                                   & \frac{1}{10^2} & & \\
                                   & & 0 & \\
                                   & & & -1 \end{pmatrix}
       M_A^{-1}(t).
\]
Expressions for other elements can be derived similarly by applying appropriate motion matrices.

The subproblems are listed as follows:
\begin{itemize}
\item $(F,F)$ --- $(F_{{\mathcal A},1}, F_{{\mathcal B},1})$
\item $(F,E)$ --- $(F_{{\mathcal A},1}, E_{{\mathcal B},1})$,
							$(F_{{\mathcal A},1}, E_{{\mathcal B},2})$,
							$(F_{{\mathcal B},1}, E_{{\mathcal A},1})$,
							$(F_{{\mathcal B},1}, E_{{\mathcal A},2})$,
							$(F_{{\mathcal A},2}, E_{{\mathcal B},1})$,		
							$(F_{{\mathcal A},2}, E_{{\mathcal B},2})$,
							$(F_{{\mathcal A},3}, E_{{\mathcal B},1})$,		
							$(F_{{\mathcal A},3}, E_{{\mathcal B},2})$,
							$(F_{{\mathcal B},2}, E_{{\mathcal A},1})$,		
							$(F_{{\mathcal B},2}, E_{{\mathcal A},2})$,
							$(F_{{\mathcal B},3}, E_{{\mathcal A},1})$,		
							$(F_{{\mathcal B},3}, E_{{\mathcal A},2})$
\item $(E,E)$ --- $(E_{{\mathcal A},1}, E_{{\mathcal B},1})$,
							$(E_{{\mathcal A},1}, E_{{\mathcal B},2})$,
							$(E_{{\mathcal A},2}, E_{{\mathcal B},1})$,
							$(E_{{\mathcal A},2}, E_{{\mathcal B},2})$
\end{itemize}						
We shall show how four of the above CCD subproblems
(corresponding to the four cases in Figure~\ref{capped cylinders})
is solved.  For brevity, contact point verification is skipped. 

\begin{itemize}
\item[\bf $(F,F)$:] $(F_{{\mathcal A},1}, F_{{\mathcal B},1})$---cylinder vs. cylinder

     The characteristic polynomial $f(\lambda;t) = \det(\lambda A(t)-B(t))$
     is quadratic in $\lambda$ (since $\det(A(t))\equiv0$ and $\det(B(t))\equiv0$).
     The candidate time instants are the roots of ${\rm Res}_\lambda(f, f_\lambda)=0$,
     which are found to be $t_0 = 0.5, 0.625, 0.875$.
     \begin{itemize}
     \item For $t_0 = 0.5$, we have $f(\lambda;t_0) = 0$.  
     	   The pencil $\lambda A(t_0)-B(t_0)$ is degenerate. 
     	   Hence, we transform ${\mathcal A}(t_0)$ and ${\mathcal B}(t_0)$
           by $M_A^{-1}(t_0)$ so that their axes (which are parallel) 
           are orthogonal to the $xy$-plane,
           and check whether the cross-sectional ellipses on the $xy$-plane
           have any contact.
           Now, the characteristic polynomial
           $f(\lambda;t_0) = -(16\lambda-1)(256 \lambda^2+112 \lambda+1)$
           of the cross-sectional ellipses does not have any multiple root.
           Hence, there is no contact between the cylinders at $t=0.5$. 
     \item For $t_0 = 0.625$,
           $f(\lambda;t_0)$ has a multiple root and hence the cylinders are 
           in contact.  The contact point is found to be $(-7.5, 10, 0)^T$ and 
           is verified to be a point on both capped cylinders. This is done exactly 
           (see Remark~\ref{rem:exact_solution}).
           Therefore, a valid first contact at $t=0.625$ 
           is found for the cylinders.  We may now skip
           the other larger roots of $\Delta_f(t)=0$ and also other candidate 
           time instants later than $t_0 =0.625$ obtained 
           in the subsequent CCD subproblems.  
           (Note that in the followings, calculations for the later candidate time instant 
           are still presented for illustrations, while they are 
           skipped in practice for efficiency considerations.)
%
     \end{itemize}
\item[\bf $(F,E)$:] $(F_{{\mathcal A},1}, E_{{\mathcal B},1})$---cylinder vs. ellipse

		 Both $F_{{\mathcal A},1}$ and $E_{{\mathcal B},1}$ are mapped by the same transformation such that $E_{{\mathcal B},1}$ is an ellipse in standard form on the $xy$-plane.
		 The intersection of the transformed $F_{{\mathcal A},1}$ and the $xy$-plane is an
		 ellipse ${\mathcal E}$, and CCD is performed between the two ellipses ${\mathcal E}$
		 and $E_{{\mathcal B},1}$.  The characteristic polynomial of the ellipses are found to have a double root at $t_0 = 0, 0.6341, 1$.
     \begin{itemize}
     \item For $t_0 = 0$, $f(\lambda;t_0) = 0$ has a double root $0$ which does not \label{res:reject}
           correspond to any valid contact and is hence rejected;
           ${\mathcal E}$ is indeed a line that does not touch $E_{{\mathcal B},1}$.
		 \item For $t_0 = 0.6341$, $f(\lambda;t_0) = 0$ has a double root
		 			 $\lambda_0 = -0.2843$.
		       A single contact point
		       $(-6.623, 10.414,$ $-4.95)^T$ is found, which is verified to lie on both
		       truncated cylinders.
		 \end{itemize}
\item[\bf $(F,E)$:] $(F_{{\mathcal A},3}, E_{{\mathcal B},1})$---plane vs. ellipse
		
		 Let ${\mathcal P}(t)$ be the plane $F_{{\mathcal A},3}$  and ${\mathcal E}(t)$ be the ellipse $E_{{\mathcal B},1}$.
		 Both ${\mathcal P}(t)$ and ${\mathcal E}(t)$ 
		 are simultaneously transformed such that ${\mathcal E}(t)$ is in standard 
		 form on the $xy$-plane.
		 The plane ${\mathcal P}(t)$ intersects the $xy$-plane in the line
		 ${\mathcal L}(u;t) = (10t-5, u(1-2t), 0, 4t^2-4t+1)^T$.
		 We now deal with CCD of the line ${\mathcal L}(u;t)$ and the ellipse ${\mathcal E}(t)$.
		 Substituting $\mathcal L$ into the ellipse equation yields
		 $h(u;t)$ and solving the discriminant $\Delta_h(t)$ 
		 gives the roots $t_0 = 0, 0.5, 1$.
     \begin{itemize}
     \item For $t_0 = 0$, solving $h(u;t_0)=0$ gives $u_0 = 0$, and
           the contact point is given by $X_0 = (25,-50,5)^T$.
           However, since $X_0^T E(t) X_0 = 49 > 0$, 
           $X_0$ is not in the elliptic disk on ${\mathcal P}(t)$ and hence
           $t_0 = 0$ is rejected.
     \item For $t_0 = 0.5$, ${\mathcal P}(t_0)$ is parallel to the $xy$-plane,
           and $t_0$ is therefore rejected.
     \item For $t_0 = 1$, the contact point is found to be $X_0 = (-25,30,5)^T$
           which does not lie within the elliptic disk on ${\mathcal P}(t)$ 
           and $t_0=1$ is rejected.
	\item Hence, ${\mathcal P}(t_0)$  and ${\mathcal E}(t_0)$ are collision-free.
	\end{itemize}	
\item[\bf $(E,E)$:] $(E_{{\mathcal A},1}, E_{{\mathcal B},1})$---ellipse vs. ellipse

                 Let ${\mathcal E}_A(t)$ be $E_{{\mathcal A},1}$ and ${\mathcal E}_B(t)$ be $E_{{\mathcal B},1}$.
		 We transform both ellipses simultaneously such that 
		 ${\mathcal E}_B(t)$ is in standard form 
		 on the $xy$-plane.
		 The containing planes of the ellipses are not equal 
		 for all $t$ and we proceed with CCD of two 1D ellipses,
		 and the candidate contact times are 
		 $t_0 = 0.5, 0.6342, 0.875, 0.9658$.
     \begin{itemize}
     \item For $t_0 = 0.5$, ${\mathcal E}_A(t_0)$ lies on the $xy$-plane; 
           hence, we perform collision detection for the two static ellipses
           ${\mathcal E}_1(t_0): \frac{x^2}{25}-\frac{y^2}{100}-\frac{2y}{5}+3=0$ and
           ${\mathcal E}_2(t_0): \frac{x^2}{400}+\frac{y^2}{1600}+\frac{y}{80}=0$.
           The characteristic equation for ${\mathcal E}_1(t_0)$ and
           ${\mathcal E}_2(t_0)$ has no multiple root, and hence there is no contact
           at $t_0 = 0.5$.
		 \item For $t_0 = 0.6342$, 
           the characteristic equation $f(\lambda)$ has a multiple root 
           and the contact point is found to be $(-6.7084,10.3668,-5)^T$.
		 \end{itemize}
\end{itemize}

\label{res:final_result}
{\bf Final result:}
		 Combining the results from all 17 subproblems, the two capped cylinders
		 are found to have the first contact at $t = 0.625$
		 for the pair $(F_{{\mathcal A},1}, F_{{\mathcal B},1})$ at $(-7.5, 10, 0)^T$
		 (Fig.~\ref{fig:cylinder_example}(b)).
		 The algorithm is implemented with Maple using exact algebraic computations for all CCD 
		 formulations.  We use floating point evaluation 
		 (15 significant digits) for solving the candidate time instants and computing the contact points. It takes 0.12 seconds to complete on an Intel Core 2 Duo E6600 2.40-GHz CPU (single-threaded).
\qed		 
\end{example}

\vspace*{1ex}

\begin{example}\label{eg:cqm}
In this example, we solve CCD for two moving CQMs as shown in
Figure~\ref{fig:cqm_example}(a).
Object $\cal A$ comprises 45 boundary elements (4 cylinders, 9 planes, 10 circles, 
14 lines and 8 vertices) while object $\cal B$ includes 13 boundary elements (3 cylinders, 
1 cone,  3 planes and 6 circles).
The specifications of the two objects are given in Figure~\ref{fig:cqm_spec}.
Object $\cal A$ translates linearly on the plane while $\cal B$ 
moves with a linear translation and a degree-2 rotation (Figure~\ref{fig:cqm_example}(a)).  
The motion matrices of ${\mathcal A}(t)$ and ${\mathcal B}(t)$ are
{\footnotesize
\[
M_A(t) = \begin{pmatrix}
      1 & 0 & 0 & 15-45t \\
      0 & 1 & 0 & 15-25t  \\
      0 & 0 & 1 & 0         \\
      0 & 0 & 0 & 1
         \end{pmatrix} \;\text{and} 
\]        
\[ 
\renewcommand{\arraystretch}{2.6}
M_B(t) = \begin{pmatrix}
	    \parbox{1in}{$\;(-u+1-\frtwo)t^2\\ +(u+\rtwo)t-\frtwo$} & 
	    \parbox{1in}{$\quad\quad(\frtwo-v) t^2\\   + (v-\rtwo)t+\frtwo$} &
	    -v t^2 + vt &	    \parbox{1.5in}{$(-60u+120)t^3+(90u-180)t^2\\\quad+(-30u+120)t-30$}\\
	    \parbox{1in}{$\quad\quad(v-\frtwo)t^2\\ +(\rtwo-v)t - \frtwo$} &
	    \parbox{1in}{$\quad\quad(-u-\frtwo)t^2\\ + (u+\rtwo)t - \frtwo$} &
	    (1-u)t^2+ut &
	    \parbox{1.5in}{$(9u-18)t^3+(16-8u)t^2\\\quad\quad+(-7-u)t-1$}\\
	    -vt^2+vt &
	    (u-1)t^2-ut &
	    \parbox{1in}{$\quad\quad(1-u)t^2\\\quad+(u-2)t+1$} &
	    \parbox{1.5in}{$(20u-40)t^3+(60-30u)t^2\\\quad+(-40+10u)t+10$} \\
	    0 & 0 & 0 & (2-u)t^2+(u-2)t+1
	   	   \end{pmatrix},
\]
}
respectively, $t \in [0,1]$.  
There are altogether 366 CCD subproblems, and it takes 
about 5 seconds to complete the CCD computations under the same Maple 
environment as in Example~\ref{eg:cylinder}.
The first contact configuration is found to happen at $t=0.313$ between 
the circle $\frac{(y+10)^2}{2^2}+\frac{z^2}{2^2}=1, x=-6$, of $\mathcal A$
and the cone $x^2 + z^2 = (y-8)^2, y\in[6,7]$, of $\mathcal B$ as shown in Figure~\ref{fig:cqm_example}(b).
\qed

\begin{figure}[!ht]
	\begin{center}
	\includegraphics[height=1.7in]{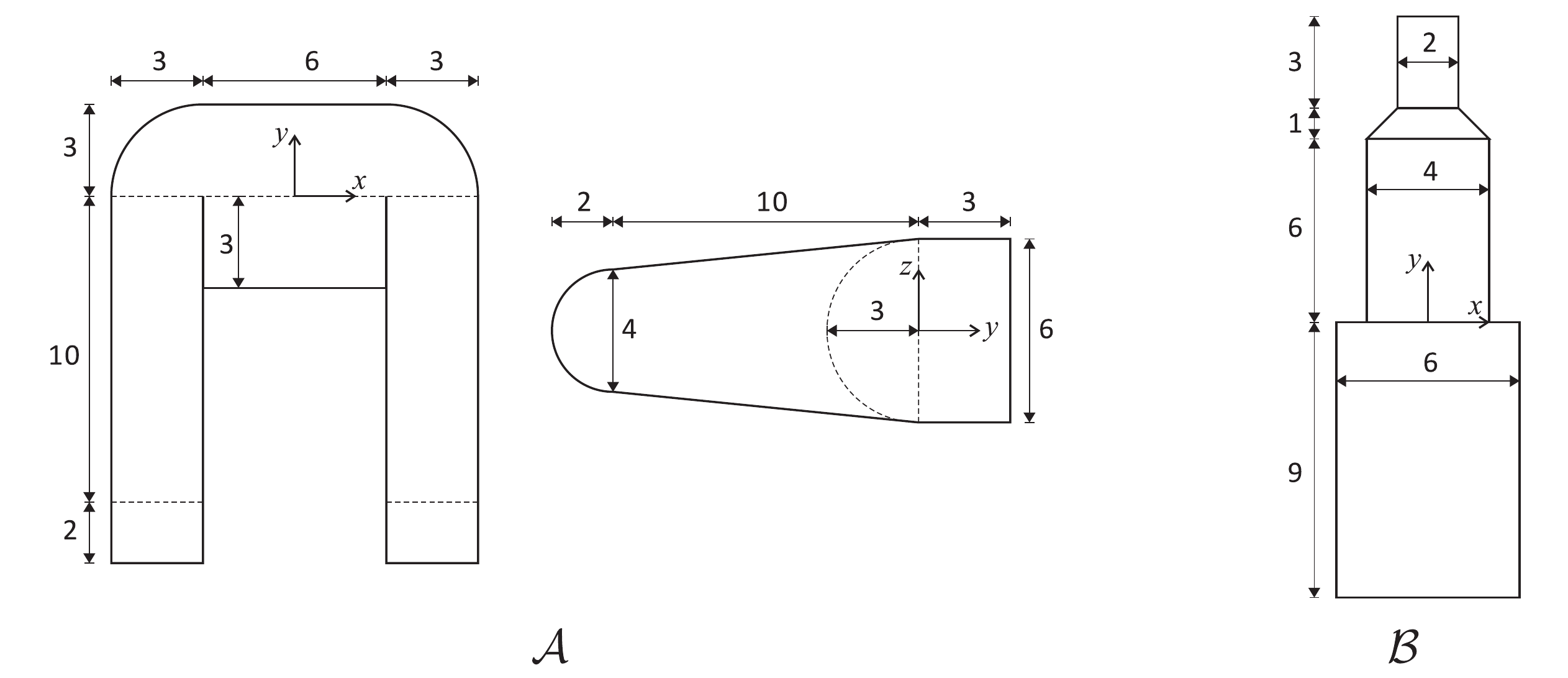}\\
	\end{center}
\caption{Specifications of two CQM objects in Example~\ref{eg:cqm}.}
\label{fig:cqm_spec}
\end{figure}

\begin{figure}[!ht]
\begin{minipage}{.5\linewidth}
	\begin{center}
	\includegraphics[height=1.7in]{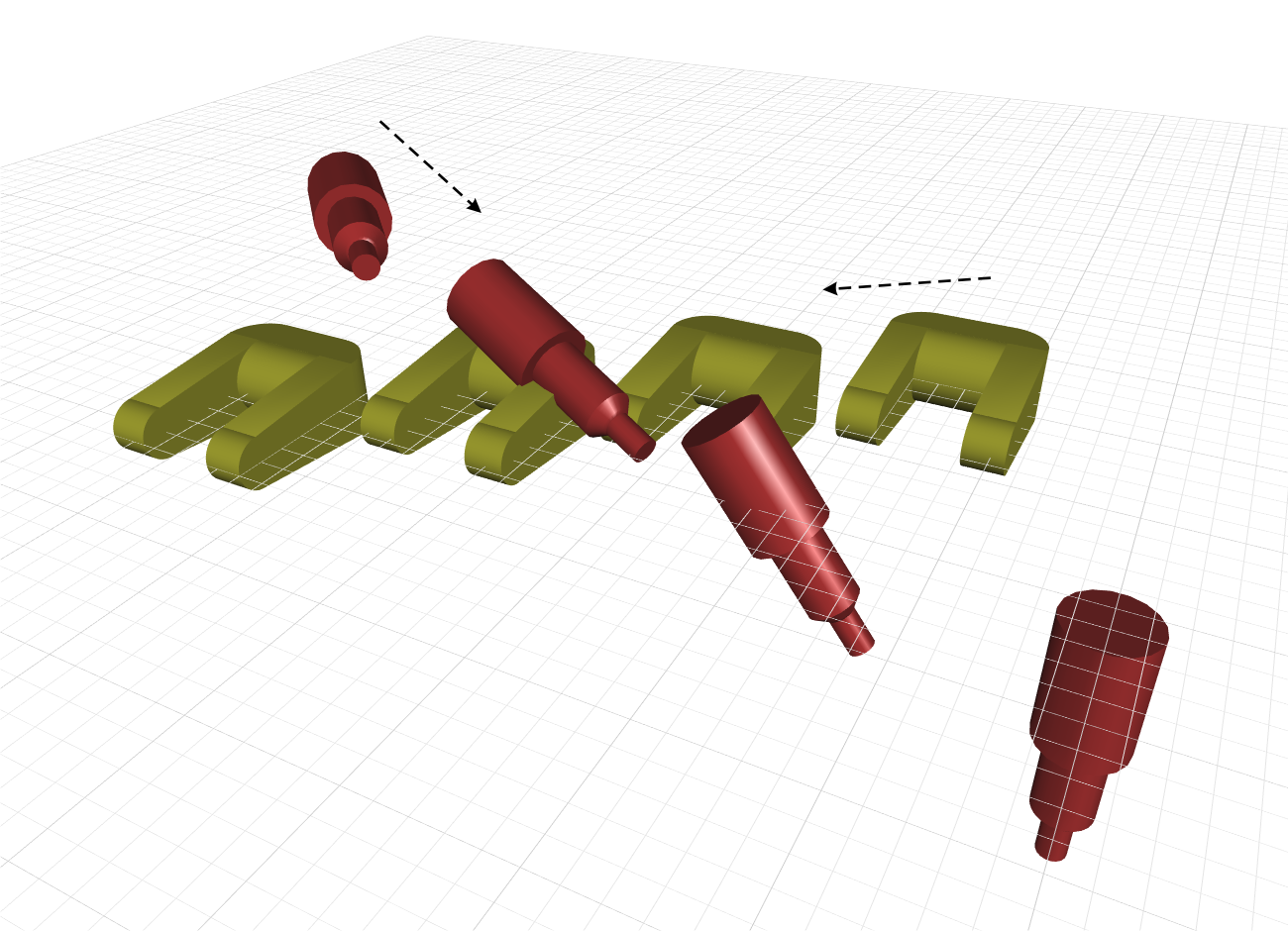}\\
	{\small (a)}
	\end{center}
\end{minipage}\hfill
\begin{minipage}{.5\linewidth}
	\begin{center}
	\includegraphics[height=1.7in]{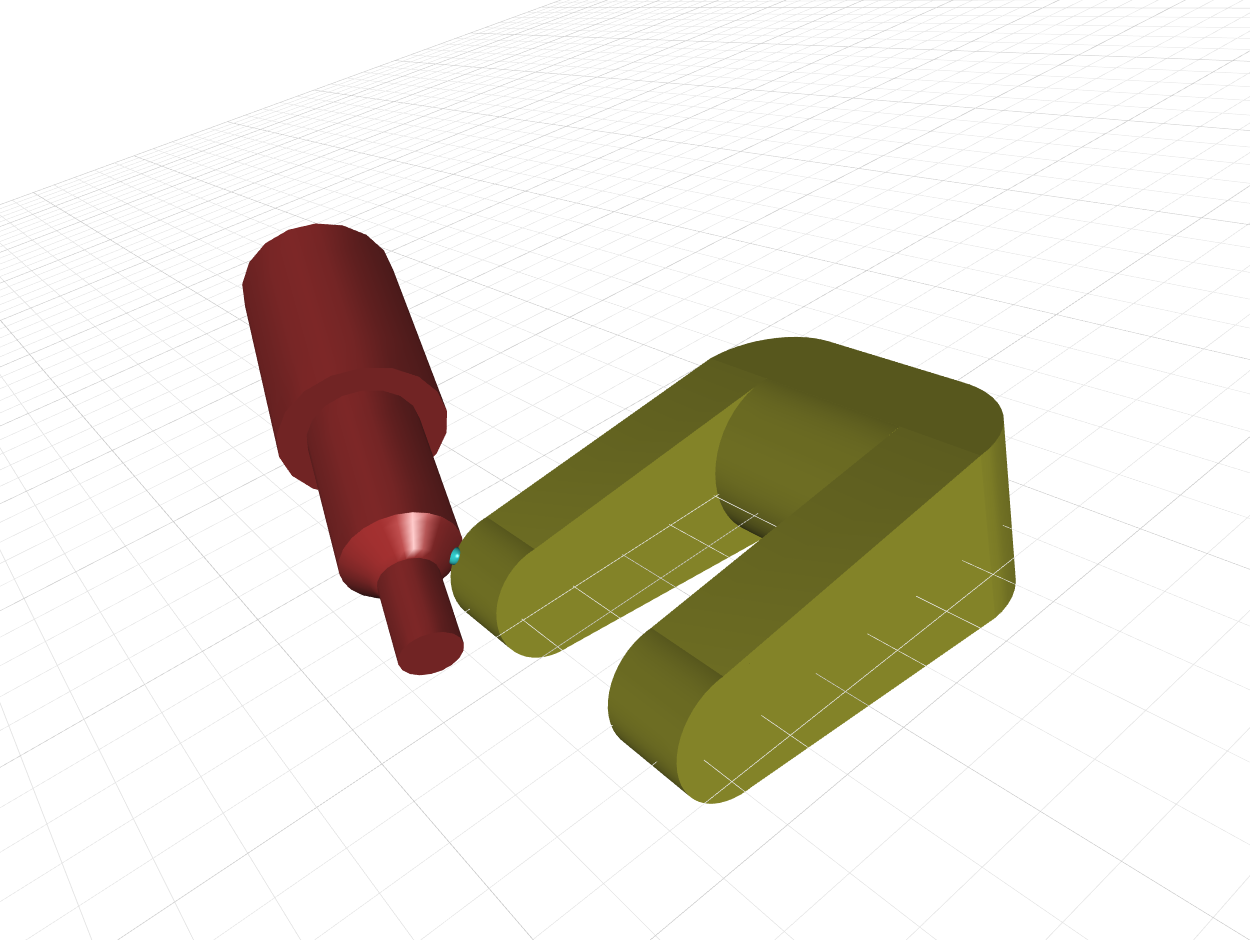}\\
	{\small (b)}
	\end{center}
\end{minipage}
\caption{(a) CCD of two CQMs in Example~\ref{eg:cqm}. (b) The first contact.}
\label{fig:cqm_example}
\end{figure}
\end{example}

\section{Conclusion}

We have presented a framework for CCD of composite
quadric models (CQMs) whose boundary surfaces are defined by
piecewise linear or quadric surface patches and whose boundary
curves are conic curves or line segments.  A hierarchy of CCD subproblems for
various types of boundary element pairs in different dimensions
are solved.  Some subproblems can be solved using a dimension
reduction technique so that the original problem is transformed
to one in a lower dimensional space.  In particular, we solved 
CCD of moving general quadrics and CCD of moving conics in 
$\mathbb R^3$.  We also developed 
procedures for contact points verification to check if
a contact point of the extended boundary elements lies on a CQM surface.

Our algorithm is exact in the sense that no approximation of the time domain 
or of the geometries is necessary.
It only requires the evaluation of polynomial expressions at real roots of other univariate polynomials, the operations of which can be performed exactly (see Remark~\ref{rem:exact_solution}).
Algebraic formulations are established for the CCD subproblems.
Out of efficiency considerations, contact time instants and the corresponding 
contact points are solved for numerically.
When the degree of motion is high, numerical stability problems thus introduced remain to be resolved.

In general, a boundary edge of a CQM may not be a conic curve,
but rather a general degree four intersection curve of two boundary
quadrics.
Algorithms for CCD of this type of general CQMs still need to be developed.
Major difficulties arise from the handling of degree four
intersection curves.
An idea is to reduce the problem of CCD of a moving general boundary edge and
a moving quadric to the study of intersection
of three quadrics in 3D (two of which intersect to give the boundary edge).
There is a contact between a quadric surface $\mathcal{A}$ and a general boundary edge which is the intersection of two quadrics $\mathcal{B}$ and $\mathcal{C}$,
if and only if $\mathcal{A}$, $\mathcal{B}$ and $\mathcal{C}$ have a common singular intersection.  The latter condition is indicated by that 
the quartic curve $G(\alpha, \beta, \gamma) \equiv \det(\alpha A +
\beta B + \gamma C) = 0$ has a singular point.
Hence, we need to develop
methods to detect the time $t_0$ at which the moving planar quartic
curve $G(\alpha, \beta, \gamma)=0$ has a singular point.
The case of CCD of two edges can be treated similarly, but is reduced
to the study of the intersection of four quadrics, that is, the two
pairs of quadrics defining two extended boundary curves. This
will then lead to the study of singularity of a quartic surface.

%
%

\bibliographystyle{plain}
\bibliography{cqm}

\begin{thebibliography}{10}

\bibitem{BPR2006}
Saugata Basu, Richard Pollack, and Marie-Fran\c{c}oise Roy.
\newblock {\em Algorithms in Real Algebraic Geometry (Algorithms and
  Computation in Mathematics)}.
\newblock Springer-Verlag New York, Inc., Secaucus, NJ, USA, 2006.

\bibitem{Berger1987}
M.~Berger.
\newblock {\em Geometry}, volume~II.
\newblock Springer-Verlag, Berlin, 1987.

\bibitem{BrochuEB2012}
Tyson Brochu, Essex Edwards, and Robert Bridson.
\newblock Efficient geometrically exact continuous collision detection.
\newblock {\em ACM Trans. Graph.}, 31(4):96:1--96:7, July 2012.

\bibitem{Bromwich1906}
Thomas John~{{I'A}} Bromwich.
\newblock {\em Quadratic forms and their classification by means of
  invariant-factors}, volume~3 of {\em Cambridge tracts in mathematics and
  mathematical physics}.
\newblock Hafner, New York, 1906.

\bibitem{Cameron1990}
Stephen Cameron.
\newblock Collision detection by four--dimensional intersection testing.
\newblock {\em {IEEE} Transactions on Robotics and Automation}, 6(3):291--302,
  1990.

\bibitem{Canny1986}
John Canny.
\newblock Collision detection for moving polyhedra.
\newblock {\em {IEEE} Transactions on Pattern Analysis and Machine
  Intelligence}, 8(2):200--209, 1986.

\bibitem{Choi2008}
Yi-King Choi.
\newblock {\em Collision Detection for Ellipsoids and Other Quadrics}.
\newblock PhD thesis, The University of Hong Kong, Pokfulam Road, Hong Kong,
  2008.

\bibitem{ChoiWLK2006}
Yi-King Choi, Wenping Wang, Yang Liu, and Myung-Soo Kim.
\newblock Continuous collision detection for two moving elliptic disks.
\newblock {\em {IEEE} Transactions on Robotics}, 22(2):213--224, 2006.

\bibitem{ChoiCWK2008}
Yi-King Choi, Jung woo Chang, Wenping Wang, Myung-Soo Kim, and Gershon Elber.
\newblock Continuous collision detection for ellipsoids.
\newblock {\em {IEEE} Transactions on Visualization and Computer Graphics},
  15(2):311--325, 2009.

\bibitem{Dupont2004}
Laurent Dupont.
\newblock {\em Param\'{e}trage quasi-optimal de l'intersection de deux
  quadriques: th\'{e}orie, algorithme et implantation}.
\newblock PhD thesis, Th\`{e}se d'universit\'{e}, Universit\'{e} Nancy II,
  2004.

\bibitem{DupontLLP2008a}
Laurent Dupont, Daniel Lazard, Sylvain Lazard, and Sylvain Petitjean.
\newblock Near-optimal parameterization of the intersection of quadrics: {I}.
  {T}he generic algorithm.
\newblock {\em Journal of Symbolic Computation}, 43(3):168 -- 191, 2008.

\bibitem{DupontLLP2008b}
Laurent Dupont, Daniel Lazard, Sylvain Lazard, and Sylvain Petitjean.
\newblock Near-optimal parameterization of the intersection of quadrics: {II}.
  {A} classification of pencils.
\newblock {\em Journal of Symbolic Computation}, 43(3):192 -- 215, 2008.

\bibitem{DupontLLP2008c}
Laurent Dupont, Daniel Lazard, Sylvain Lazard, and Sylvain Petitjean.
\newblock Near-optimal parameterization of the intersection of quadrics: {III}.
  {P}arameterizing singular intersections.
\newblock {\em Journal of Symbolic Computation}, 43(3):216 -- 232, 2008.

\bibitem{Farouki1989}
R.~T. Farouki, C.~Neff, and M.~A. O'Conner.
\newblock Automatic parsing of degenerate quadric-surface intersections.
\newblock {\em ACM Trans. Graph.}, 8(3):174--203, 1989.

\bibitem{GovindarajuKLM2006}
Naga~K. Govindaraju, Ilknur Kabul, Ming~C. Lin, and Dinesh Manocha.
\newblock Fast continuous collision detection among deformable models using
  graphics processors.
\newblock {\em Computers {\&} Graphics}, 31(1):5--14, 2007.

\bibitem{Hill1994}
F.~S. Hill, Jr.
\newblock The pleasures of ``perp dot'' products.
\newblock {\em Graphics gems IV}, pages 138--148, 1994.

\bibitem{JiaWC2012}
Xiaohong Jia, Wenping Wang, Yi-King Choi, Bernard Mourrain, and Changhe Tu.
\newblock Continuous detection of the variations of the intersection curves of
  two moving quadrics in 3-dimensional projective space.
\newblock Technical Report TR-2012-15, Department of Computer Science, The
  University of Hong Kong, 2012.

\bibitem{JuettlerW2002}
Bert J{\"u}ttler and Michael~G. Wagner.
\newblock Kinematics and animation.
\newblock In M.S.~Kim G.~Farin, J.~Hoschek, editor, {\em Handbook of Computer
  Aided Geometric Design}, pages 723--748. Elsevier, 2002.

\bibitem{KimRLMT2007}
Young~J. Kim, Stephane Redon, Ming~C. Lin, Dinesh Manocha, and Jim Templeman.
\newblock Interactive continuous collision detection using swept volume for
  avatars.
\newblock {\em Presence}, 16(2):206--223, 2007.

\bibitem{Levin1979}
J.Z. Levin.
\newblock Mathematical models for determining the intersections of quadric
  surfaces.
\newblock {\em Comput. Graph. Image Process.}, 1:73--87, 1979.

\bibitem{Miller1987}
James~R. Miller.
\newblock Geometric approaches to nonplanar quadric surface intersection
  curves.
\newblock {\em ACM Trans. Graph.}, 6(4):274--307, 1987.

\bibitem{RedonKC2000}
Stephane Redon, Abderrahmane Kheddar, and Sabine Coquillart.
\newblock An algebraic solution to the problem of collision detection for rigid
  polyhedral objects.
\newblock In {\em Proceedings of the 2000 {IEEE} International Conference on
  Robotics and Automation}, pages 3733--3738, San Francisco, USA, April 2000.

\bibitem{RedonLMK2005}
Stephane Redon, Ming~C. Lin, Dinesh Manocha, and Young~J. Kim.
\newblock Fast continuous collision detection for articulated models.
\newblock {\em ASME J. Comput. Inf. Sci. Eng.}, 5(2):126--137, 2005.

\bibitem{SempleK1952}
J.G. Semple and G.K. Kneebone.
\newblock {\em Algebraic Projective Geometry}.
\newblock Oxford University Press, London, 1952.

\bibitem{MinKM2010}
Min Tang, Y.J. Kim, and D.~Manocha.
\newblock Continuous collision detection for non-rigid contact computations
  using local advancement.
\newblock In {\em Robotics and Automation (ICRA), 2010 IEEE International
  Conference on}, pages 4016 --4021, May 2010.

\bibitem{TuWMW2009}
Changhe Tu, Wenping Wang, Bernard Mourrain, and Jiaye Wang.
\newblock Using signature sequence to classify intersection curves of two
  quadrics.
\newblock {\em Computer Aided Geometric Design}, 26(3):317--335, 2009.

\bibitem{TuWW2002}
Changhe Tu, Wenping Wang, and Jiaye Wang.
\newblock Classifying the nonsingular intersection curve of two quadric
  surfaces.
\newblock In {\em Geometric Modeling and Processing (GMP 2002), Theory and
  Applications}, pages 23--32, Wako, Saitama, Japan, 10-12 July 2002.

\bibitem{Wang2002}
Wenping Wang.
\newblock Modelling and processing with quadric surfaces.
\newblock In M.S.~Kim G.~Farin, J.~Hoschek, editor, {\em Handbook of Computer
  Aided Geometric Design}, pages 777--795. Elsevier, 2002.

\bibitem{WangGT2003}
Wenping Wang, Ronald~N. Goldman, and Changhe Tu.
\newblock Enhancing levin's method for computing quadric-surface intersections.
\newblock {\em Computer Aided Geometric Design}, 20(7):401--422, 2003.

\bibitem{WangJG2002}
Wenping Wang, Barry Joe, and Ronald~N. Goldman.
\newblock Computing quadric surface intersections based on an analysis of plane
  cubic curves.
\newblock {\em Graphical Models}, 64(6):335--367, 2002.

\bibitem{WangWK2001}
Wenping Wang, Jiaye Wang, and Myung-Soo Kim.
\newblock An algebraic condition for the separation of two ellipsoids.
\newblock {\em Comput. Aided Geom. Des.}, 18(6):531--539, 2001.

\bibitem{WilfM1993}
Itzhak Wilf and Yehuda Manor.
\newblock Quadric-surface intersection curves: shape and structure.
\newblock {\em Computer-Aided Design}, 25(10):633--643, 1993.

\bibitem{ZhangRLK2007}
Xinyu Zhang, Stephane Redon, Minkyoung Lee, and Young~J. Kim.
\newblock Continuous collision detection for articulated models using {Taylor}
  models and temporal culling.
\newblock {\em ACM Trans. Graph.}, 26(3):15, 2007.

\end{thebibliography}

\begin{appendix}

\section{Proof of Lemma~\ref{lem:cqm_touch_config}}\label{app:proof}

\begin{proof}

There are at most four singular
quadrics in a nondegenerate pencil for all quadrics, since each singular quadric corresponds to 
a root of the characteristic polynomial. Hence, we may suppose that
$\mathcal{A}$ is nonsingular, for if $\mathcal{A}$ is singular we can always find a nonsingular member $\mathcal{A}'$ in the
pencil of $A$ and $\mathcal{B}$ to replace $\mathcal{A}$, and the intersection curve of $\mathcal{A}'$ and $\mathcal{B}$ is the same as that of $\mathcal{A}$ and $\mathcal{B}$. 

\noindent{\bf Case 1}:
If $rank(\lambda_0 A - B) = 3$,
$\lambda_0$ is associated with exactly one Jordan block of size $k \times k$, 
$k\geq 2$, in the Jordan canonical form of $A^{-1}B$.
\label{res:Jordan}
Then there exists a unique real eigenvector $X_0 \not=0$ (up to a scalar factor) and a generalized
eigenvector $X_1\not=0$ of $A^{-1}B$ associated with $\lambda_0$ such that
$(\lambda_0 A - B) X_0 = 0$ and  
$(\lambda_0 A - B) X_1 = X_0$.
Therefore, we have $(\lambda_0 A - B)^2 X_1 = 0$.
Then
\begin{align*}
X_0^T A X_0
&= [ (\lambda_0 A - B) X_1 ]^T A [ (\lambda_0 A - B) X_1 ]
= X_1^T (\lambda_0 A - B) A (\lambda_0 A - B) X_1
= X_1^T A (\lambda_0 A - B)^2 X_1\\
&= 0.
\end{align*}
It follows that $X_0$ is a point on $\mathcal A$.
Similarly, we can show that $X_0$ is a point on $\mathcal B$.
Since $(\lambda_0 A - B) X_0 = 0$,
we have $\lambda_0 A X_0 = B X_0$.
If $\lambda_0 \not=0$, 
$\mathcal A$ and $\mathcal B$ have a common tangent plane at $X_0$;
otherwise, $BX_0 = 0$ and the tangent plane of $\mathcal B$ at $X_0$ is undefined
which is only possible when $\mathcal B$ is a cone (since
$\mathcal B$ is irreducible) and $X_0$ is a vertex of $\mathcal B$.

\noindent{\bf Case 2}:
First we show that $\lambda_0 \not=0$.  Suppose on the contrary that
$\lambda_0=0$.  Then we have $rank(B) = rank(0\cdot A - B) = 2$ which contradicts that $rank(B) > 2$ since $B$ is irreducible.

Now, since $rank(\lambda_0 A - B) = 2$,
the null space of $A^{-1}B$ is of dimension $2$
and $\lambda_0$ is associated with two linearly independent
eigenvectors $X_0$ and $X_1$, such that
$(\lambda_0 A - B) X_0 = 0$ and
$(\lambda_0 A - B) X_1 = 0$.
Consider a line $X(u;v) = u X_0 + v X_1$.
A line in general intersects a quadric in two points (counting
multiplicity) in $\mathbb{PC}^3$, or that the line lies on the quadric.
We first assume that $X(u;v)$ intersects $\mathcal A$ at two points.
Let ${\tilde X} = X({\tilde u};{\tilde v})$ for some 
${\tilde u}, {\tilde v} \in \mathbb C$ be such an intersection point
on $\mathcal A$.
Since $(\lambda_0 A - B)X_0 = 0$ and $(\lambda_0 A - B)X_1 = 0$,
we have
\begin{align*}
0 &= {\tilde u} (\lambda_0 A - B)X_0 + {\tilde v} (\lambda_0 A - B)X_1
  = (\lambda_0 A - B){\tilde X}
  = {\tilde X}^T(\lambda_0 A - B){\tilde X}
  = \lambda_0 {\tilde X}^TA{\tilde X}
      - {\tilde X}^T B {\tilde X}\\
  &= - {\tilde X}^T B {\tilde X}.
\end{align*}

The last equality holds since ${\tilde X}$ is
on $\mathcal A$.
Hence, we have shown that
${\tilde X}$ is also on $\mathcal B$.
Moreover, since $\lambda_0 A X_0 = BX_0$ and $\lambda_0A X_1 = BX_1$,
we have
\begin{align*}
\lambda_0 A {\tilde X}
&= \lambda_0 A({\tilde u} X_0 + {\tilde v} X_1)
= {\tilde u} \lambda_0 A X_0 + {\tilde v} \lambda_0 A X_1
= {\tilde u} B X_0 + {\tilde v} B X_1\\
&= B {\tilde X}.
\end{align*}
Since $\lambda_0 \not=0$, 
the tangent planes 
$X^TA{\tilde X}=0$ and
$X^TB{\tilde X}=0$ of
$\mathcal A$ and $\mathcal B$ at ${\tilde X}$
are identical, and hence
$\mathcal A$ and $\mathcal B$ are tangential at ${\tilde X}$.

If the line $X(u;v)$ lies on the quadric surfaces,
we may show similarly that $\mathcal A$ and $\mathcal B$
are tangential at every point on $X(u;v)$.\\

\noindent{\bf Case 3}:
Again we have $\lambda_0 \not = 0$, or otherwise
$rank(B) = rank(0\cdot A - B) = 1$ which contradicts that $rank(B) > 2$ since $B$ is irreducible.

Now, since $rank(\lambda_0 A - B) = 1$,
the null space of $A^{-1}B$ is of dimension $3$
and $\lambda_0$ is associated with three linearly independent
eigenvectors $X_0$,$X_1$,$X_2$ such that
$(\lambda_0 A - B) X_0 = 0$,  
$(\lambda_0 A - B) X_1 = 0$ and
$(\lambda_0 A - B) X_2 = 0$.
Now let $X(u; v; w) = u X_0 + v X_1 + w X_2$
be the plane spanned by the eigenvectors $X_0$,$X_1$ and $X_2$.
Then this plane in general intersects $\mathcal{A}$ in a conic 
in $\mathbb{PC}^3$.
Let $\tilde X$ = $X({\tilde u}; {\tilde v}; {\tilde w})$ for
some $u', v', w' \in \mathbb R$
be a point on this conic.
We have
\begin{align*}
0 &= {\tilde u} (\lambda_0 A - B)X_0 + {\tilde v} (\lambda_0 A - B)X_1
     + {\tilde w} (\lambda_0 A - B)X_2 
  = (\lambda_0 A - B){\tilde X}
  = {\tilde X}^T
  			(\lambda_0 A - B){\tilde X}\\
  &= \lambda_0 {\tilde X}^T
  			A{\tilde X}
      - {\tilde X}^T
      	B {\tilde X}\\
  &= - {\tilde X}^T B
  			 {\tilde X}.
\end{align*}
The last equality holds as $\tilde X$
is on $\mathcal A$.
Hence, $\tilde X$ is also on $\mathcal B$.
It means that $\mathcal A$ and $\mathcal B$ share a common intersection
curve with the plane $X(u; v; w)$.
Now, we also have
\begin{align*}
\lambda_0 A {\tilde X}
&= \lambda_0 A({\tilde u} X_0 + {\tilde v} X_1 + {\tilde w}X_2)
= {\tilde u} \lambda_0 A X_0 + {\tilde v} \lambda_0 A X_1
		+ {\tilde w} \lambda_0 A X_2
= {\tilde u} B X_0 + {\tilde v} B X_1 + {\tilde w} B X_2\\
&= B {\tilde X}.
\end{align*}
Since $\lambda_0 \not= 0$, the tangent planes 
$X^T A {\tilde X}=0$ and $X^T B {\tilde X}=0$
of ${\mathcal A}$ and ${\mathcal B}$ 
are identical at every point ${\tilde X}$
on the intersection curve.
Hence, we have ${\mathcal A}$ and ${\mathcal B}$ 
are tangential along a conic curve in $\mathbb{PC}^3$.
\end{proof}

\end{appendix}

\end{document}